\newtheorem{theorem}{Theorem}
\newtheorem{proposition}{Proposition}
\newtheorem{lemma}{Lemma}
\newtheorem{remark}{Remark}
\numberwithin{equation}{section}
\newcommand{\C}{\mathbb{C}}
\newcommand{\R}{\mathbb{R}}
\newcommand{\Z}{\mathbb{Z}}
\newcommand{\N}{\mathbb{N}}
\newcommand{\dd}{\mathrm{d}}
\newcommand{\ed}{\mathrm{e}}
\newcommand{\bx}{{\mathbf{x}}}
\newcommand{\by}{{\mathbf{y}}}
\newcommand{\bsigma}{{\boldsymbol{\sigma}}}
\newcommand{\btau}{{\boldsymbol{\tau}}}
\newcommand{\Const}{\mathrm{C}}
\newcommand{\const}{c}
\newcommand{\Proba}{\mathsf P}
\newcommand{\imag}{\mathrm i}
\newcommand{\randomt}{{\mathsf t}}
\newcommand{\har}{{\mathrm{har}}}
\newcommand{\an}{{\mathrm{an}}}
\begin{document}

% title
\title{Sub-Power Law Decay of the Wave Packet Maximum in Disordered Anharmonic Chains}

\author[1]{Wojciech De Roeck}
\author[2]{Lydia Giacomin}
\author[1]{Amirali Hannani}
\author[3]{François Huveneers}
\affil[1]{Institute for Theoretical Physics, KU Leuven, 3000 Leuven, Belgium}
\affil[2]{Ceremade, UMR-CNRS 7534, Université Paris Dauphine, PSL Research
University, 75775 Paris, France}
\affil[3]{Department of Mathematics, King’s College London, Strand, London WC2R 2LS, United Kingdom}

\date{\today}
\maketitle 

\begin{abstract}
We show that the peak of an initially localized wave packet in one-dimensional nonlinear disordered chains decays more slowly than any power law of time.
%than $t^{-\alpha}$ for any $\alpha > 0$ as the time $t$ goes to infinity, . 
%any inverse fractional power of time in one-dimensional nonlinear disordered chains. 
The systems under investigation are Klein-Gordon and nonlinear disordered Schrödinger-type chains, characterized by a harmonic onsite disordered potential and quartic nearest-neighbor coupling.
Our results apply in the long-time limit, hold almost surely, and are valid for arbitrary finite energy values.
\end{abstract}

\section{Introduction}

The spreading of wave packets in disordered systems is a fascinating problem that has drawn attention from both mathematicians and physicists. 
In this paper, we explore the behavior of an initially localized wave packet in a one-dimensional disordered lattice, governed by either a nonlinear disordered Klein-Gordon (KG)-type equation or a disordered nonlinear Schrödinger (DNLS)-type equation, considered in the atomic limit where harmonic nearest-neighbor coupling is suppressed.

When all couplings in the lattice are harmonic, making the dynamics linear, it is well known that the wave packet remains localized due to Anderson localization, see \cite{anderson_absence_1958} as well as \cite{gol1977pure,molchanov78,kunz_souillard,carmona} for mathematical results in one dimension.
However, the introduction of anharmonic interactions makes the equations of motion nonlinear and brings new and intriguing dynamics into play. 
The prevailing conjecture is that the wave packet begins to spread, provided its initial energy is sufficiently high. 
Despite numerous numerical \cite{pikovsky_destruction_2008,garcia-mata_delocalization_2009,flach_universal_2009,skokos_delocalization_2009,skokos_spreading_2010,mulansky_scaling_2011,vakulchyk_wave_2019}, 
theoretical \cite{fishman_perturbation_2009,fishman_nonlinear_2012,basko_weak_2011} and analytical \cite{frohlich_localization_1986,benettin_nekhoroshev_1988,poschel_small_1990,bourgain_wang_2007,wang_long_2009,cong_long-time_2021,cong_2024} works, the exact nature and rate of this spreading remain elusive and continue to spark curiosity.
See also \cite{basko_weak_2011,de_roeck_slow_dissipation_2024} for further refererences.

There is a clear tension between the conclusions drawn from numerical studies and those suggested by the theoretical and mathematical analyses cited above. On the one hand, numerical works consistently report that the wave packet spreads as a power law over time, with a dynamical exponent that depends on the non-linearity of the model. A notable exception is \cite{mulansky_scaling_2011}, which examines the exact same model studied in this paper and observes deviations from the power-law regime, aligning with an effective description. Additionally, a few numerical experiments at positive energy density hint indirectly at the possibility of non-power-law spreading of the wave packet \cite{oganesyan_pal_huse_2009,kumar_transport_2020}. 

Theoretical and mathematical studies on the other hand, consistently suggest ultra-slow, non-power-law spreading, a conjecture further supported indirectly by mathematical analyses of out-of-time equilibrium correlators \cite{huveneers_drastic_2013,de_roeck_long_persistence_2023,de_roeck_slow_dissipation_2024}. Despite the considerable progress they bring to the field, existing mathematical works fall short of providing a definitive answer for several reasons: they may apply only to a restricted set of initial conditions \cite{frohlich_localization_1986,poschel_small_1990}, hold only for long transient times \cite{benettin_nekhoroshev_1988,wang_long_2009,cong_long-time_2021,cong_2024}, or rely on simplifying modifications to the model \cite{bourgain_wang_2007}.
The aim of this paper is to overcome all these limitations.

Two common ways of quantifying the spreading of wave packets have been considered. Since both the KG and DNLS chains conserve total energy, expressed as $H = \sum_{x \in \mathbb{Z}} H_x$, it is natural to describe the packet's location based on the distribution of its energy. One method is to compute a ``moment'' of the packet, with the second moment being the most commonly used:  
\[
q_2(t) \; = \; \sum_{x \in \mathbb{Z}} x^2 H_x(t),
\]  
which provides a measure of how far the energy is distributed.  
Another way is to use the inverse participation ratio:  
\[
    r_2(t) \; = \; \left( \sum_{x \in \mathbb{Z}} H_x^2(t) \right)^{-1}, 
\]  
which measures the number of sites over which the energy is concentrated. This last measure is directly related to the maximum value of the wave packet, \( M(t) = \max_{x \in \mathbb{Z}} H_x(t) \), through the inequality:  
\[
    \frac{1}{E_0 M(t)} \;\leq\; r_2(t) \;\leq\; \frac{1}{M(t)^2},
\]  
where $E_0$ is the total energy of the packet.

If the wave packet spreads properly, staying within a reasonable envelope, we expect the quantities $q_2(t)$ and $r_2(t)$ to be related by $q_2(t) \sim r_2(t)^2$. However, if the packet only wanders without spreading, or if part of it remains localized, the second moment $q_2(t)$ may grow on average while the inverse participation ratio $r_2(t)$ does not. Numerical experiments that have investigated this aspect consistently support the first scenario: the wave packet spreads properly, at least over the time scales accessible to numerical simulations;
see e.g.\@ \cite{skokos_spreading_2010}.

From a mathematical perspective, studying $q_2(t)$ or $r_2(t)$ presents very different challenges. In the absence of any explicit dissipative term in the equations of motion, proving that the wave packet spreads rather than simply wanders is extremely difficult. Since spreading is what ensures that the non-linearity becomes effectively weaker over time, obtaining interesting bounds on $q_2(t)$ remains probably out of reach with current methods. In contrast, analyzing the decay of the wave packet’s maximum, $M(t)$, and consequently $r_2(t)$, is far more accessible. Smaller values of $M(t)$ directly correspond to weaker effective non-linearity, making $r_2(t)$ a more tractable quantity for mathematical analysis.

In this work, we demonstrate that the maximum $M(t)$, and consequently the inverse participation ratio $r_2(t)$, decays slower than any inverse power of $t$ for dynamics governed by an onsite quadratic potential and quartic nearest-neighbor interactions. Our analysis applies to both the KG and DNLS chains. 
To make our novel method as clear as possible, we have chosen to present it in a system where its application is most transparent. However, we are confident that this approach extends to a broader class of disordered systems, the exploration of which we leave for future work.

\paragraph{Acknowledgments:}
A.~H.\@ and W.~D.~R.\@ were supported by the FWO-FNRS EOS research project G0H1122N EOS 40007526 CHEQS, the KULeuven Runner-up grant iBOF DOA/20/011, and the internal KULeuven grant C14/21/086.

\section{Model and Result}

Let 
$\ell^2(\Z,\R^2)$ be the Hilbert space of sequences $(q,p)=(q_x,p_x)_{x\in\Z}$ such that $\sum_{x\in\Z} q_x^2 + p_x^2$ is finite, 
and let the Hamiltonian $H:\ell^2(\Z,\R^2)\to\R$ be given by
\begin{equation}\label{eq: Hamiltonian}
    H(q,p) 
    \; = \;
    \sum_{x\in\Z} \frac{p_x^2}{2} + \frac{\omega_x^2 q_x^2}{2} + U(q_x - q_{x+1})
    \; = \; 
    \sum_{x\in\Z} H_x(q,p)
\end{equation}
with 
\[
    U(r) \; = \; \frac{g}4 r^4, \qquad g \ge 0. 
\]
Here $(\omega_x^2)_{x\in\Z}$ is an i.i.d.\@ sequence of random variables with a smooth density supported on the compact interval $[\omega_-^2,\omega_+^2]$, 
where $0 < \omega_-^2 < \omega_+^2 < + \infty$.
By energy conservation, the Hamiltonian dynamics 
\[
    \frac{\dd q}{\dd t} \; = \; p, \qquad
    \frac{\dd p}{\dd t} \; = \; -\nabla_q H
\]
is well defined: $(q(t),p(t))\in\ell^2(\Z,\R^2)$ for all times $t\ge 0$ provided  $(q(0),p(0))\in\ell^2(\Z,\R^2)$.
We note that a smooth, local observable $O$ evolves as 
\[
    \frac{\dd O}{\dd t} \; = \; \lbrace H,O\rbrace
\]
with the Poisson bracket $\{\cdot,\cdot\}$ being defined as 
\[
    \{f,g\} \; = \; \nabla_p f \cdot \nabla_q g - \nabla_q f \cdot \nabla_p g
\]
for smooth observables $f,g$. 

Let us now describe the initial condition $(q(0),p(0))$ considered throughout this paper. 
We say that $I\subset \Z$ is an \emph{interval} if $I = I'\cap\Z$ for some interval $I'\subset\R$. 
We assume that there exists a compact interval $I_0\subset \Z$ containing $0$ as well as a positive number $E_0>0$ such that 
\begin{enumerate}
    \item 
    $(q(0),p(0))$ is supported in $I_0$, 

    \item 
    $\max_{x\in\Z} H_x(q(0),p(0))
      =  H_0(q(0),p(0)) = E_0$.
\end{enumerate}

This paper deals with the long time behavior of 
\[
    M(t) \; = \; \max_{x\in\Z}H_x(t).
\]
We prove
\begin{theorem}\label{th: main theorem}
    There exists a random variable $\randomt^*\ge 1$ with finite moments of all orders such that
    \begin{equation}\label{eq: result main theorem}
        t \; \ge \; \randomt^*
        \qquad \Rightarrow \qquad
        M(t) \; \ge \; \ed^{-2 (\ln t)^{3/4}}.
    \end{equation}
\end{theorem}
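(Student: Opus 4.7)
The plan is to deduce the theorem from a Nekhoroshev-type stability estimate: if $M(s)\le E$ at some reference time $s$, then $M(t)\ge E/2$ for all $t\in[s,s+F(E)]$ with $F(E)=\exp(c\,|\ln E|^{4/3})$. Once this is in hand, a short minimality argument applied to the first time $M(\cdot)$ crosses a given threshold, run on a favorable event of the disorder, yields $M(t)\ge e^{-c'(\ln t)^{3/4}}$ for all $t$ larger than a random time $\randomt^*$; the exponents $4/3$ and $3/4$ are conjugate, and standard tail estimates on the favorable event give $\randomt^*$ finite moments of all orders.

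To prove the stability estimate I would pass to the Anderson-localized normal modes $(\psi_k,\Omega_k)$ of the harmonic part of $H$, which are almost surely exponentially localized around centers $x_k\in\Z$. In these coordinates, the quartic interaction becomes a sum of four-mode monomials whose amplitudes decay exponentially in the diameter of the involved centers, and the linear dynamics decouples into oscillators conserving each modal action $J_k$. Since $H_0(t)$ is comparable to some $J_{k_*}(t)$ up to exponentially small tails (for $k_*$ centered near $0$), it suffices to bound an appropriate action from below over the stability window. I would then iterate Birkhoff normal-form transformations, each solving a cohomological equation that removes non-resonant quartic (and then higher-order) monomials in exchange for higher-degree monomials of slightly wider spatial support. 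After $N$ steps, the remainder driving the chosen action has size at most $R(N)\,E^{N+1}$, where $R(N)$ encodes combinatorial Poisson-bracket growth and the inverses of small denominators $\sum_i m_i\Omega_{k_i}$ appearing in the generating functions; optimizing $N$ against $E$ produces $F(E)$, the exponent $4/3$ being dictated by the growth of $R(N)$ with $N$.

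The main obstacle will be the probabilistic control of these small denominators, whose count and potential smallness both grow with $N$. I would derive the required Diophantine-type lower bound, uniform over a combinatorially large family of resonance conditions, from the smooth density of the $\omega_x^2$ together with the exponential decay of the modes, which keeps the effective number of participating modes moderate and propagates the randomness of the $\omega_x^2$ through to the $\Omega_k$. A secondary difficulty is that the eigenmodes on $\Z$ depend on all the disorder; one must therefore carry out the procedure with eigenmodes in a finite box and control the boundary-truncation error, which is again exponentially small in the distance to the boundary. Collecting the failure events over boxes of growing size and invoking Borel–Cantelli then delivers the almost-sure random time $\randomt^*$ with the claimed moment bounds. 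A coarser bound on the denominators would degrade the exponent in $F(E)$ and hence in \eqref{eq: result main theorem}, while a sharper one would improve it.
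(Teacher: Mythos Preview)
Your overall plan---a Nekhoroshev-type stability time $F(E)=\exp(c|\ln E|^{4/3})$ combined with a first-passage argument to produce the $(\ln t)^{3/4}$ threshold, with probabilistic control of small divisors and a Borel--Cantelli step for $\randomt^*$---matches the paper's architecture and lands on the right exponents. Two substantive points, however, separate your sketch from the actual proof.

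\textbf{You have misread the model.} The Hamiltonian here has \emph{no} harmonic nearest-neighbor coupling: $H_\har=\sum_x(p_x^2/2+\omega_x^2 q_x^2/2)$ is already diagonal in the site basis, with i.i.d.\ frequencies $\omega_x$. There are no Anderson-localized normal modes to construct, no finite-box eigenmode truncation, and no associated boundary error; roughly half of your proposal addresses difficulties that simply do not arise in the atomic limit the paper treats. Once this is stripped away, the small-divisor condition is just $|\sum_i\sigma_i\omega_{x_i}|\ge\delta$, and the smooth density of $\omega_x$ gives $\Proba(|\Delta|\le\delta)\le\Const\delta$ directly. Relatedly, your reference to ``$H_0(t)$ comparable to $J_{k_*}$'' suggests tracking the action near the origin, but the site carrying the maximum moves; the relevant action is at the random site $x(\randomt_{\epsilon,\epsilon'})$.

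\textbf{The paper takes a lighter route than Birkhoff normal form.} Rather than conjugating $H$, it builds an approximate fluctuation--dissipation decomposition of the energy current, $j_x=-\{H,u_x\}+g_x$, with $g_x$ homogeneous of degree $2(n+2)$; integrating the continuity equation over a spatial interval turns the time-integrated current into a bounded boundary term plus a high-order remainder. This avoids any global canonical transformation. A key device absent from your sketch is that the paper does \emph{not} require non-resonance at $x(\randomt_{\epsilon,\epsilon'})$ itself: it takes the maximal $(n,\delta)$-resonant interval $R$ around that site and applies the decomposition only at the non-resonant endpoints of $R$, tracking the total energy $H_R$ rather than a single action; the probabilistic input is then merely that $|R|\le\ln\randomt_{\epsilon,\epsilon'}$. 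Finally, the union bound over the random location of $x(\randomt_{\epsilon,\epsilon'})$ is made finite by a deterministic light-cone estimate $M(t)|x(t)|\le\Const t$, which your proposal does not mention but without which the Borel--Cantelli sum would diverge.
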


\begin{remark}\label{rem: non-optimality}
    The exponent \(3/4\) in this theorem is not claimed to be optimal. What matters is that \( M(t) \) decays slower than any inverse power of \( t \) as \( t \to \infty \).
\end{remark}

\begin{remark}\label{rem: dnls chain}
    The result holds as well for the \emph{Discrete Non-Linear Schrödinger Equation (DNLS)}, where the Hamiltonian in \eqref{eq: Hamiltonian} is replaced by 
    \begin{equation}\label{eq: Hamiltonian dnls}
    H(\psi) \; = \; \sum_{x\in\Z} \omega_x |\psi_x|^2 + (g/4)|\psi_x - \psi_{x+1}|^4
    \end{equation}
    with $\psi = (\psi_x)_{x\in\Z}\in\ell^2(\Z,\C)$. % a complex square-integrable field. 
    In this case, Hamilton's equations take the form of the Schrödinger equation
    \begin{equation}\label{eq: schrodinger equation}
        \imag \frac{\dd \psi}{\dd t} \; = \; \nabla_{\overline\psi}H
    \end{equation}
    where we view $H$ in \eqref{eq: Hamiltonian dnls} as a function of $\psi$ and $\overline\psi$ using the identity $|z|^2=z\overline z$. 
    See Section~\ref{sec: DNLS}.
\end{remark}

% \begin{remark}
%     The constant $\Const_\theta$ in this theorem is deterministic, but may depend on the distribution of the disorder $(\omega_x^2)_{x}$, the coupling parameter $g$ as well as $I_0$ and $E_0$ are fixed. 
% \end{remark}

The remainder of this text is devoted to the proof of Theorem~\ref{th: main theorem}. 
We will denote by \emph{constants} positive, deterministic, real numbers that may depend on the  distribution of the disorder, $g$, $I_0$ and $E_0$.
We will often use the letters $\Const$ and $\const$ to denote generic constants, with the understanding that their values may vary from one instance to another.
Usually, we use the letter $\Const$ to stress that the constant needs to be taken large enough, and the letter $\const$ to stress that it needs to be taken small enough.

\section{A Priori Bounds}

We provide two a priori bounds that will be used throughout the remainder of this paper. 

\subsection{Light-Cone-Type Bound}
%\textcolor{magenta}{$x(t)$ to $|x(t)|$? Later during the proof: positive and negative: separated. e.g. $x(t)$ is the point such that $|x(t)|$ is the largest value in $\mathbb{Z}$ such that $H_{x(t)}=M(t)$}
%{\blue Defining it this way wouldn't make it unique in general. I guess, if there are several possible points, it doesn't matter which one we choose, and I was just suggesting to take the largest of them as one easy option.}
For $t\ge 0$, let $x(t)$ be the largest point in $\Z$ such that 
\begin{equation}\label{eq: def of x(t)}
    H_{x(t)}(t) \; = \; M(t).
    %\qquad \text{and}\qquad 
    %|x(t)| \; = \; \max\{|x|:  H_x(t) = M(t) \}.
\end{equation}
%{\red This doesn't need to be unique: there could be 2 such points. Perhaps it doesn't matter and taking the largest of any of them is good.}
Given a site $x\in\Z$, we define also the energy current between site $x-1$ and $x$ as 

\begin{equation}\label{eq: jx}
    j_x \; = \; \{H_{x-1},H_x\} \; = \; p_x U'(q_{x-1}-q_x) \; = \; g p_x (q_{x-1}-q_x)^3.
\end{equation}
With this definition, the continuity equation for the energy writes 
\begin{equation}\label{eq: continuity equation}
    \frac{\dd H_x}{\dd t} 
    \; = \; \{H,H_x\} 
    \; = \; \{H_{x-1},H_x\} - \{H_x,H_{x+1}\}
    \; = \; j_x - j_{x+1}.
\end{equation}

We establish a light-cone-type bound that constrains the position of $x(t)$: 
\begin{lemma}\label{lem: light cone}
    There exists a constant $\Const$ such that for any $t\ge 1$, 
    \[
    M(t) |x (t)| \;\le\; \Const t. 
    \]
\end{lemma}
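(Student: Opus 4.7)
The plan is to use the continuity equation \eqref{eq: continuity equation} to control how much energy can have been transported from the initial support $I_0$ to the current maximum site $x(t)$ within time $t$. Without loss of generality, assume $x(t) > \max I_0$; the case $x(t) < \min I_0$ is symmetric, and when $x(t) \in I_0$ the bound is trivial since both $M(t)$ and $|x(t)|$ are bounded by constants. For any site $y$ with $\max I_0 < y \le x(t)$, summing \eqref{eq: continuity equation} over $x \ge y$ telescopes to $\frac{\dd}{\dd t} \sum_{x\ge y} H_x = j_y$, and integrating in time gives
\[
\sum_{x \ge y} H_x(t) \;=\; \int_0^t j_y(s)\, \dd s,
\]
since the left-hand side vanishes at $t=0$ by the support assumption on the initial data. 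Because $H_{x(t)}(t) = M(t)$ and $x(t) \ge y$, the left side is at least $M(t)$, and therefore $M(t) \le \int_0^t |j_y(s)|\, \dd s$ for every admissible $y$.

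The second step is to bound the total current $\sum_{y \in \Z} |j_y(s)|$ by a time-independent constant. Young's inequality applied to $j_y = g\, p_y (q_{y-1} - q_y)^3$ with exponents $4$ and $4/3$ yields $|j_y| \le (g/4)\, p_y^4 + (3g/4)(q_{y-1}-q_y)^4$. Since $p_y^2 \le 2 H_y$ and each $H_y$ is bounded by the conserved total energy $E_{\mathrm{tot}} := H(q,p)$, we get $p_y^4 \le 4 E_{\mathrm{tot}} H_y$. Similarly, the definition \eqref{eq: Hamiltonian} directly gives $(g/4)(q_{y-1}-q_y)^4 \le H_{y-1}$. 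Summing over $y \in \Z$ and using that $E_{\mathrm{tot}}$ is finite (the initial data is supported in the compact interval $I_0$), we obtain $\sum_y |j_y(s)| \le \Const$ uniformly in $s$.

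Finally, summing the inequality $M(t) \le \int_0^t |j_y(s)|\, \dd s$ over $y \in \{\max I_0 + 1, \ldots, x(t)\}$ and exchanging sum and integral gives
\[
M(t) \cdot \bigl(x(t) - \max I_0\bigr) \;\le\; \int_0^t \sum_{y \in \Z} |j_y(s)|\, \dd s \;\le\; \Const\, t,
\]
after which using $M(t) \le E_{\mathrm{tot}} \le \Const$ together with $|\max I_0| \le \Const$ absorbs the lower-order terms to produce $M(t) |x(t)| \le \Const\, t$ for $t \ge 1$. I do not foresee a serious obstacle; the only point requiring care is justifying the telescoping identity $\sum_{x \ge y}(j_x - j_{x+1}) = j_y$, which follows from $|j_N(s)| \to 0$ as $N \to \infty$ for each $s$, a consequence of $(q(\cdot), p(\cdot)) \in C([0,t], \ell^2(\Z,\R^2))$.
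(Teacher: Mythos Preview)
Your proof is correct and follows essentially the same approach as the paper: integrate the continuity equation to get $\sum_{x'\ge y}H_{x'}(t)=\int_0^t j_y(s)\,\dd s$ for each $y$ between $I_0$ and $x(t)$, lower-bound the left side by $M(t)$, then sum over $y$ and control $\sum_y|j_y|$ by a conserved quantity. The only cosmetic difference is in that last step: the paper bounds $|j_x|\le \Const(H_{x-1}^2+H_x^2)$ via $|q_x|,|p_x|\le \Const H_x^{1/2}$, whereas you use Young's inequality and the direct inclusion $(g/4)(q_{y-1}-q_y)^4\le H_{y-1}$; both give $\sum_y|j_y(s)|\le \Const$ uniformly in $s$.
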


\begin{proof}
    The bound is clearly satisfied as long as \( x(t) \in I_0 \). 
    Therefore, we focus on the case where \( x(t) \notin I_0 \).  
    Specifically, let us consider the case \( x(t) > \max I_0 \); the case \( x(t) < \min I_0 \) can be treated analogously. 
    For $x>\max I_0$, $H_x(0) = 0$ and the continuity equation~\eqref{eq: continuity equation} yields %\textcolor{green}{check convergence} \
    \[
        \int_0^t \dd s\, j_x(s) \; = \; \sum_{x'\ge x} H_{x'}(t).
    \]
    If, in addition, $x\le x(t)$, the right hand side of this equality is lower bounded by $M(t)$. 
    Therefore 
    \[
        M(t) \; \le \; \frac{1}{x(t) - \max I_0}\sum_{\max I_0 < x \le x(t)}\int_0^t \dd s \, j_x(s) .
    \]
    Now we claim that there exists a constant $\Const$ such that
    \begin{equation}\label{eq: bounded current}
        \sum_{x\in \Z}\int_0^t \dd s\, |j_x(s)| \; \le \; \Const t.
    \end{equation}
    Indeed, from the definition of $j_x$ in~\eqref{eq: jx} and the bound $|q_x|,|p_x| \le \Const H_x^{1/2}$, we find $\sum_x |j_x| \le \Const \sum_x(H_{x-1}^2 + H_x^2) \le \Const H^2$. 
    Since $H^2$ is conserved and bounded by $(I_0 E_0)^2$, the bound \eqref{eq: bounded current} follows.  
    Therefore, we have arrived at
    \[
        M(t) |x(t) - \max I_0| \; \le \; \Const t
    \]
    which yields the statement upon enlarging the constant (recall $t\geq 1$.).  
\end{proof}

\subsection{A Priori Lower Bound on $M(t)$}
Given $0<\epsilon<\epsilon'<1$, let us define the (possibly infinite) random times
\begin{align}
    \randomt_\epsilon \; &= \; \min\{t\ge 0 : M(t) \le \epsilon\}, \label{eq: t epsilon}\\
    \randomt_{\epsilon,\epsilon'} \; &= \; \max\{t \le \randomt_\epsilon: M(t)\ge \epsilon'\}, \label{eq: t epsilon epsilon prime}
\end{align}
with the convention $\min\varnothing=+\infty$.
Note that these times are well-defined since $t\mapsto M(t)$ is continuous, being the maximum of Lipschitz-continuous functions with a uniform Lipschitz constant. 
We derive a first, deterministic, lower bound on $\randomt_\epsilon$: 

\begin{lemma}\label{lem: a priori bound}
    There exists a constant $\const$ such that, for any $\epsilon > 0$, 
    \begin{equation}\label{eq: deterministic lower bound t epsilon}
        \randomt_\epsilon \; \ge \; \frac{\const}{\epsilon}.
    \end{equation}
    In particular, $\randomt_\epsilon\to \infty$ deterministically as $\epsilon \to 0$.  
\end{lemma}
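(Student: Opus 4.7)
The plan is to derive the integral inequality
\[
    M(t) \;\geq\; M(s) - \Const \int_s^t M(u)^2 \, \dd u, \qquad 0 \leq s \leq t,
\]
and then, by ODE comparison, conclude that $M(t) \geq \const/t$ for large $t$, which immediately gives $\randomt_\epsilon \geq \const/\epsilon$. The underlying mechanism is that the current $j_x$ contains four factors of amplitude (one from $p_x$ and three from $(q_{x-1}-q_x)^3$), so $|j_x|$ is bounded by a constant times $M^2$ and becomes very small when $M$ does; energy can therefore leave the maximum site only slowly.

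First, I would prove the pointwise estimate $|j_x(t)| \leq \Const M(t)^2$ directly from \eqref{eq: jx}, using $|p_x| \leq \sqrt{2 H_x}$ and $|q_x| \leq \sqrt{2 H_x}/\omega_-$ together with $H_x \leq M$. Combined with the continuity equation \eqref{eq: continuity equation}, this yields $|\dd H_x / \dd t| \leq 2 \Const M(t)^2$ uniformly in $x$. Integrating from $s$ to $t$ and then taking the supremum over $x$ on both sides (which is legitimate because the upper bound is independent of $x$) produces the integral inequality displayed above. Next, let $\phi(t) = E_0/(1 + 2 \Const E_0 t)$, the solution of $\phi' = -2\Const \phi^2$ with $\phi(0) = E_0$; it satisfies the same identity with equality. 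A standard comparison argument based on continuity of $M$—examining a hypothetical first time at which $M$ would cross $\phi$ from above and applying the inequality for $M$ together with the identity for $\phi$ on a subsequent subinterval—yields $M(t) \geq \phi(t)$ for all $t \geq 0$. Finally, for $\epsilon \leq E_0/2$, if $\randomt_\epsilon < \infty$ then continuity gives $M(\randomt_\epsilon) = \epsilon$, and the bound $\epsilon \geq \phi(\randomt_\epsilon)$ rearranges to $\randomt_\epsilon \geq (1/\epsilon - 1/E_0)/(2\Const) \geq \const/\epsilon$; larger values of $\epsilon$ are absorbed by shrinking the constant.

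The main technical point is the pointwise current bound $|j_x| \leq \Const M^2$, which essentially uses the quartic form of $U$: the four factors of amplitude combine into $M^2$, and this quadratic vanishing of the current as $M \to 0$ is precisely what drives the $1/t$ decay of $M$. A purely harmonic interaction would only give $|j_x| \leq \Const M$ and hence an exponential, rather than power-law, lower bound on $M(t)$. The remaining scalar ODE comparison is routine.
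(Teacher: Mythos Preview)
Your proof is correct. The key estimate $|j_x|\le \Const M^2$ is identical to the paper's, and your ODE comparison argument is sound (in particular, taking the supremum works because one applies the integrated inequality at the specific site $x^*$ where $H_{x^*}(s)=M(s)$ and then bounds $M(t)\ge H_{x^*}(t)$).

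The route, however, differs from the paper's. Instead of setting up a global differential inequality, the paper fixes the site $x(\randomt_{\epsilon,2\epsilon})$ and integrates the continuity equation only over the interval $[\randomt_{\epsilon,2\epsilon},\randomt_\epsilon]$, on which $M(s)\le 2\epsilon$; then $|j_x|\le \Const(H_{x-1}^2+H_x^2)\le \Const\epsilon^2$ directly gives $\epsilon \le \Const\epsilon^2(\randomt_\epsilon-\randomt_{\epsilon,2\epsilon})$ and hence $\randomt_\epsilon\ge \const/\epsilon$. Your ODE comparison yields the marginally stronger statement $M(t)\ge E_0/(1+\Const E_0 t)$ for all $t\ge 0$, from which the lemma follows; the paper's argument is shorter and avoids the comparison lemma, but needs the auxiliary time $\randomt_{\epsilon,2\epsilon}$. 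The two conclusions are equivalent, and the paper in fact reuses exactly this ``fixed site, fixed time window'' template later (Lemma~\ref{lem: probabilistic improvement on the a priori bound}), which is why it favors that formulation here.
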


\begin{proof}
    Let $\epsilon > 0$. 
    Integrating the continuity equation~\eqref{eq: continuity equation} at the point $x(\randomt_{\epsilon,2\epsilon})$ yields 
    \[
        H_{x(\randomt_{\epsilon,2\epsilon})}(\randomt_{\epsilon}) -
        H_{x(\randomt_{\epsilon,2\epsilon})}(\randomt_{\epsilon,2\epsilon})
        \; = \; 
        \int_{\randomt_{\epsilon,2\epsilon}}^{\randomt_\epsilon}
        \dd s\, 
        \left(j_{x(\randomt_{\epsilon,2\epsilon})}(s) - j_{x(\randomt_{\epsilon,2\epsilon}) + 1}(s)\right).
    \]
    On the one hand, the absolute value of the left-hand side is lower bounded by $\epsilon$. 
    On the other hand, as argued in the proof of\eqref{eq: bounded current} above, $|j_x| \le \Const (H_{x-1}^2+H_x^2)$ for every $x\in\Z$, hence the absolute value of the right-hand side is upper bounded by $\Const \epsilon^2 (\randomt_\epsilon - \randomt_{\epsilon,2\epsilon})$.  
    This yields the claim. 
\end{proof}

\section{Approximate Fluctuation-Dissipation Decomposition}

To go beyond the crude bound \eqref{eq: deterministic lower bound t epsilon}, we will use an approximate fluctuation-dissipation decomposition of the current $j_x$ defined in \eqref{eq: jx}, as given by \eqref{eq: approximate commutator equation} below.
We do not expect $j_x$ to be written as a total derivative of a local function, since this would imply the total absence of macroscopic transport. 
In other words, we do not anticipate that the equation $j_x = - \{H,u_x\}$ can be solved with $u_x$ depending on a finite number of coordinates around the site $x$
(non-local solutions are easily found, e.g., $u_x = \sum_{y \leq x-1} H_y$, reflecting energy conservation). Instead, we will construct a perturbative expansion and find two local functions $u_x$ and $g_x$ satisfying 
\begin{equation}\label{eq: approximate commutator equation}
    j_x \; = \; -\{H,u_x\} + g_x. 
\end{equation}
Importantly, $g_x$ can be constructed as a homogeneous polynomial in $q$ and $p$ of arbitrarily high degree, at the price of enlarging the support of $u_x$ and $g_x$. 
At low values of the local energy density $\varrho$, this term will thus vanish as $\varrho^{d/2}$ with $d$ the degree of $g_x$. 

% i.e.\@ to solve the equation $j_x = - \{H,u_x\}$, where $u_x$ is local, i.e.\@ depends , although  Indeed, this would imply the complete absence of macroscopic transport. 

% The aim of this section is thus to provide an approximate solution to the equation $j_x = - \{H,u_x\}$, where $u_x$ is local, i.e.\@ depends on a finite number of coordinates around the site $x$. 
% Although non-local solutions are easily found, e.g., $u_x = \sum_{y \leq x-1} H_y$, the existence of a local solution would imply the complete absence of transport and is therefore not anticipated. Instead, we will construct a perturbative expansion and find two local functions $u_x$ and $g_x$ satisfying 

We follow the strategy developed in \cite{huveneers_drastic_2013,de_roeck_long_persistence_2023}.
The expansion is set up as follows.
Let $f^{(1)}_x = j_x$ and let us assume that we can define recursively observables $u^{(i)}_x$ and $f^{(i+1)}_x$ for $i\ge 1$ satisfying
\begin{equation}\label{eq: perturbative scheme}
    f^{(i)}_x \; = \; -\{H_{\har},u^{(i)}_x\}, 
    \qquad 
    f^{(i+1)}_x \; = \; \{H_{\an},u^{(i)}_x\}
\end{equation}
with 
\begin{equation}\label{eq: H har and an}
    H_\har \; = \; \sum_{x\in\Z} \frac{p_x^2}{2} + \frac{\omega_x^2 q_x^2}{2}, 
    \qquad 
    H_\an \; = \; \sum_{x\in\Z} U(q_{x-1} - q_{x}).
\end{equation}
Given an integer $n\ge 1$, we define 
\begin{equation}\label{eq: solutions u and f order n}
    u_x \; = \; u^{(1)}_x + \dots + u^{(n)}_x, \qquad g_x \; = \; f^{(n+1)}_x
\end{equation}
and we verify that $u_x$ and $f_x$ solve \eqref{eq: approximate commutator equation}. 

\subsection{Diagonalization of the Harmonic Dynamics}\label{sec: change of variables}
Let us introduce a change of coordinates: given $x\in\Z$, let
\begin{equation}\label{eq: change of variables}
    \begin{cases}
    a_x^\pm \; = \; \sqrt{\frac{\omega_x}{2}} \left( q_x \mp \imag\frac{p_x}{\omega_x} \right),\\
    q_x \; = \; \frac{1}{\sqrt{2\omega_x}}(a_x^+ + a_x^-), 
    \qquad 
    p_x \; = \; \imag \sqrt{\frac{\omega_x}{2}}(a_x^+ - a_x^-).
    \end{cases}
\end{equation}
Given $\bx = (x_1,\dots,x_d)\in\Z^d$ and $\bsigma=(\sigma_1,\dots,\sigma_d)\in\{\pm 1\}^d$
for some integer $d\ge 1$, let 
\begin{equation}\label{eq: basic monomial}
    a_\bx^\bsigma \; = \; a_{x_1}^{\sigma_1}\dots a_{x_d}^{\sigma_d}.
\end{equation}
We write also $d = d(\bx,\bsigma)$.
These coordinates have the advantage that the map $u\mapsto \{H_\har,u\}$ becomes diagonal. 
Indeed, the rules
\begin{equation}\label{eq: poisson bracket rules}
    \{a^\sigma_x,a^{\sigma'}_{x'}\} 
    \; = \; 
    -\imag \sigma \delta(\sigma+\sigma') \delta(x-x'), 
    \qquad 
    \{\varphi_1,\varphi_2\varphi_3\} 
    \; = \; 
    \{\varphi_1,\varphi_2\} \varphi_3 + \varphi_2 \{\varphi_1,\varphi_3\}
\end{equation}
%\textcolor{magenta}{the first equation is missing a - sign I think... inconsistent with the convention taken for the Poisson bracket here (consistent with usual convention)}
hold for any $x,x'\in\Z$ and any smooth observables $\varphi_1,\varphi_2,\varphi_3$.
Here and in the remainder of the paper, to avoid the appearance of excessive subscripts, we use $\delta(\cdot)$ to denote the Kronecker delta function, i.e.\@ $\delta(x) = 1$ if $x=0$ and $\delta(x)=0$ otherwise.
Since 
\begin{equation}\label{eq: H har}
H_\har \; = \; \sum_{x\in\Z} \omega_x a_x^+ a_x^-,
\end{equation}
we compute that given some $(\bx,\bsigma)$, 
\begin{equation}\label{eq: def denominators}
    \{H_\har,a^\bsigma_\bx\} 
    \; = \; \imag \Delta(\bx,\bsigma) a^\bsigma_\bx
    \qquad \text{with} \qquad 
    \Delta(\bx,\bsigma) \; = \; \sigma_{1} \omega_{x_1} + \dots +\sigma_{d} \omega_{x_d}
\end{equation}
and $d=d(\bx,\bsigma)$.

We can now identify all monomials of the form \eqref{eq: basic monomial} that do not lie in the image of the map $u \mapsto -\{H_\har, u\}$.
We say that $(\bx,\bsigma)\in\mathcal S$ if and only if $d(\bx,\bsigma)$ is even and if it is possible to partition the set $(1,\dots,d)$ into pairs $(l_1,m_1), \dots, (l_{d/2},m_{d/2})$ such that $x_{l_k} = x_{m_k}$ and $\sigma_{l_k} = - \sigma_{m_k}$ for $1 \le k \le d/2$. 
One observes that $\Delta(\bx,\bsigma) = 0$ a.s.\@ if and only if $(\bx,\bsigma)\in\mathcal S$, hence $a_\bx^\bsigma$ belongs to the image of $u\mapsto -\{H_\har,u\}$ a.s.\@ if and only if $(\bx,\bsigma)\notin\mathcal S$.

\subsection{Good Behavior of the Perturbative Expansion}
Let us unveil a symmetry that ensures the functions $f_x^{(i)}$ are in the image of the map $u \mapsto -\{H_\har, u\}$ almost surely, for all $x \in \Z$ and $i \geq 1$.
Let us start with introducing some notions. 
Let a homogeneous polynomial of degree $d\ge 1$ be a function of the form
\[
    f \; = \; \sum_{(\bx,\bsigma)\in\Z^d\times\{\pm 1 \}^d} F(\bx,\bsigma) a_\bx^\bsigma.
\]
Note that $0$ can therefore be considered as a polynomial of arbitrary degree.
%Although our definition of a polynomial is formal, all polynomials useful to us will be bona fide functions on $\ell^2(\mathbb{Z},\R^2)$.
Given a homogeneous polynomial of degree $d$, we define 
\[
    \mathcal P f \; = \; 
    \sum_{(\bx,\bsigma)\in\Z^d\times\{\pm 1 \}^d} F(\bx,\bsigma) a_\bx^{-\bsigma}
    \qquad \text{with} \qquad -\bsigma \; = \; (-\sigma_1,\dots,-\sigma_d).
\]
A polynomial $f$ is said to be $\mathcal P$-symmetric if $\mathcal P f = f$ and $\mathcal P$-skew-symmetric if $\mathcal P f = - f$. 

We now state the announced symmetry: 
\begin{lemma}
If $f$ is a $\mathcal P$-symmetric homogeneous polynomial of degree $d$ and $f'$ a $\mathcal P$-symmetric homogeneous polynomial of degree $d'$, then $\{f,f'\}$ is a $\mathcal P$-skew-symmetric homogeneous polynomial of degree $d+d'-2$. 
Similarly, if $f$ is a $\mathcal P$-symmetric homogeneous polynomial of degree $d$ and $f'$ a $\mathcal P$-skew-symmetric homogeneous polynomial of degree $d'$, then $\{f,f'\}$ is a $\mathcal P$-symmetric homogeneous polynomial of degree $d+d'-2$.
\end{lemma}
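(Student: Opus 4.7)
The plan is to reduce the lemma to a single identity about how $\mathcal{P}$ intertwines with the Poisson bracket, namely
\[
    \mathcal{P}\{f,f'\} \; = \; -\{\mathcal{P}f,\mathcal{P}f'\},
\]
valid for any two homogeneous polynomials in the variables $a_x^\sigma$. Once this is established, the two statements of the lemma are immediate: if $\mathcal{P}f=f$ and $\mathcal{P}f'=f'$, then $\mathcal{P}\{f,f'\}=-\{f,f'\}$, so $\{f,f'\}$ is $\mathcal{P}$-skew-symmetric; if $\mathcal{P}f=f$ and $\mathcal{P}f'=-f'$, then $\mathcal{P}\{f,f'\}=-\{f,-f'\}=\{f,f'\}$, so $\{f,f'\}$ is $\mathcal{P}$-symmetric.

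By bilinearity of the Poisson bracket and of $\mathcal{P}$, it suffices to check the identity on basic monomials $a_\bx^\bsigma$ and $a_\by^\btau$ of degrees $d$ and $d'$. First I would apply the Leibniz rule from \eqref{eq: poisson bracket rules} repeatedly to expand
\[
    \{a_\bx^\bsigma, a_\by^\btau\}
    \; = \; \sum_{j=1}^{d}\sum_{k=1}^{d'} \{a_{x_j}^{\sigma_j},a_{y_k}^{\tau_k}\} \cdot \prod_{j'\neq j} a_{x_{j'}}^{\sigma_{j'}} \prod_{k'\neq k} a_{y_{k'}}^{\tau_{k'}},
\]
and then substitute the elementary bracket $\{a_{x_j}^{\sigma_j},a_{y_k}^{\tau_k}\} = -\imag\sigma_j\delta(\sigma_j+\tau_k)\delta(x_j-y_k)$. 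This already establishes the degree count $d+d'-2$, handling the degree part of the statement.

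For the symmetry part, the key computation is to compare the above with its image under $\mathcal{P}$, which simply flips the signs on all surviving $a$-factors. On the other side, computing $\{a_\bx^{-\bsigma}, a_\by^{-\btau}\}$ by the same formula produces the same surviving monomials with the signs flipped, but the elementary scalar factor becomes $-\imag(-\sigma_j)\delta(-\sigma_j-\tau_k)\delta(x_j-y_k)=+\imag\sigma_j\delta(\sigma_j+\tau_k)\delta(x_j-y_k)$, using that $\delta$ is even. Comparing term by term gives
\[
    \{\mathcal{P}a_\bx^\bsigma,\mathcal{P}a_\by^\btau\} \; = \; -\mathcal{P}\{a_\bx^\bsigma,a_\by^\btau\},
\]
which is the identity we wanted.

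I do not anticipate any serious obstacle: the whole argument is a direct computation once the sign flip in the elementary bracket is noted. The only point requiring a bit of care is keeping track of the bookkeeping when differentiating the monomials via Leibniz, and verifying that the minus sign arises from exactly one source — the factor $\sigma_j$ in $\{a_x^\sigma,a_{x'}^{\sigma'}\}=-\imag\sigma\,\delta(\sigma+\sigma')\delta(x-x')$ being odd under $\sigma\mapsto -\sigma$, while the $\delta$ factors are even.
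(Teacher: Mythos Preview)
Your proof is correct and follows essentially the same route as the paper's. Both arguments rest on the single observation, coming from the rules \eqref{eq: poisson bracket rules}, that the elementary bracket $\{a_x^\sigma,a_{x'}^{\sigma'}\}$ picks up a minus sign when $(\sigma,\sigma')\mapsto(-\sigma,-\sigma')$; the paper phrases this as saying that $\{a_\bx^\bsigma,a_\by^\btau\}\pm\{a_\bx^{-\bsigma},a_\by^{-\btau}\}$ has definite $\mathcal P$-parity, while you package it as the intertwining identity $\mathcal P\{f,f'\}=-\{\mathcal P f,\mathcal P f'\}$, which is a slightly cleaner formulation of the same computation.
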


\begin{proof}
This follows from the rules \eqref{eq: poisson bracket rules}, which imply in particular that $\{a_\bx^\bsigma,a_\by^\btau\} \pm \{a_\bx^{-\bsigma},a_\by^{-\btau}\}$ is $\mathcal{P}$-skew-symmetric if $\pm = +$, and $\mathcal{P}$-symmetric if $\pm = -$, for any $(\bx,\bsigma)$ and $(\by,\btau)$.
%\textcolor{magenta}{Other possible method : we have $\mathcal P\{f,f'\} = -\{\mathcal Pf, \mathcal Pf'\}$.} 
%\textcolor{green}{Look at it later!}
\end{proof}

The key consequence for us is that the iteration \eqref{eq: perturbative scheme} is well-defined.
Specifically, we ensure that $f_x^{(i)}$ can be taken to be a $\mathcal P$-skew-symmetric homogeneous polynomial of degree $4 + 2(i-1)$ for every $x \in \Z$ and $i \ge 1$, 
implying that $f_x^{(i)}$ lies in the image of the map $u \mapsto -\{H_\har, u\}$, since any monomial $a_\bx^\bsigma$ is $\mathcal P$-symmetric when $(\bx, \bsigma) \in \mathcal S$.
(We say ``can be taken to be'' rather than ``is'' because it is always possible to modify $u_x^{(i)}$ by adding elements from the kernel of $u\mapsto -\{H_{\har},u\}$, which would in turn modify $f_x^{(i+1)}$.) 

%{\red Coefficients to be defined uniquely. Symmetrization.}
To check this, we proceed to some explicit computations: 
\begin{enumerate}
\item 
By~\eqref{eq: jx} and \eqref{eq: change of variables}, 
$f^{(1)}$ is $\mathcal P$-skew-symmetric, homogeneous, and of degree 4. 
More precisely, it takes the form 
\begin{equation}\label{eq: current new variables}
    f^{(1)}_x \; = \; j_x \; = \; \sum_{(\bx,\bsigma)} J_x(\bx,\bsigma) a_\bx^\bsigma
\end{equation}
where the sum runs over all $(\bx,\bsigma)$ such that $d(\bx,\bsigma) = 4$.
Note that the coefficients $J_x(\bx,\bsigma)$ are not uniquely determined by \eqref{eq: jx} and \eqref{eq: change of variables}.
We take them such that $J_x(\bx',\bsigma')=J_x(\bx,\bsigma)$ whenever $a_{\bx'}^{\bsigma'} = a_{\bx}^{\bsigma}$.
In paticular, they vanish unless $\bx = (x_1,\dots,x_4)$ is such that $x_1,\dots,x_4 \in\{x-1,x\}$. 

\item 
It follows directly from \eqref{eq: H har} that $H_\har$ is a homogeneous $\mathcal P$-symmetric polynomial of degree 2. 

\item 
Using \eqref{eq: H har and an} and \eqref{eq: change of variables},  we conclude that $H_\an$ is a homogeneous $\mathcal P$-symmetric polynomial of degree $4$. More precisely, 
\begin{equation}\label{eq: H an expansion}
    H_\an \; = \; \sum_{\bx,\bsigma}V(\bx,\bsigma) a_\bx^\bsigma
\end{equation}
where the sum runs again over all $(\bx,\bsigma)$ such that $d(\bx,\bsigma) = 4$. 
Here as well, we take the coefficients $V(\bx,\bsigma)$ such that $V(\bx',\bsigma')=V(\bx,\bsigma)$ whenever $a_{\bx'}^{\bsigma'} = a_\bx^\bsigma$. 
They vanish unless $\bx = (x_1,\dots,x_4)$ is such that $x_1,\dots,x_4 \in\{y-1,y\}$ for some $y\in\Z$.
%, and are translation invariant, i.e.\@ $V(\bx',\bsigma)=V(\bx,\bsigma)$ if there exists $y\in\Z$ such that $\bx' = (x_1+y,\dots,x_4+y)$ if $\bx=(x_1,\dots,x_4)$. 
\end{enumerate}
The claim that that $f_x^{(i)}$ can be taken to be a $\mathcal P$-skew-symmetric polynomial of degree $4 + 2(i-1) = 2(i+1)$ then readily follows from the iteration~\eqref{eq: perturbative scheme}, from the action of $u\to\{H_\har,u\}$ in \eqref{eq: def denominators} and the symmetry discussed above.

% We now arrive at the announced symmetry:  
% since $f_x^{(1)}$ is a real polynomial with purely imaginary coefficients, it follows that, for any $i\ge 1$, $f_x^{(i)}$ can also be taken to be polynomial with purely imaginary coefficients. This ensures that $f_x^{(i)}$ lies in the image of the map $u \mapsto -\{ H_{\text{har}}, u \}$ a.s.\@ for all $i \geq 1$. Hence, the iteration~\eqref{eq: perturbative scheme} can be carried out to all orders.  
% Our claim about the symmetry follows directly from the recurrence relation~\eqref{eq: perturbative scheme}, the rules~\eqref{eq: poisson bracket rules}, and the expansion~\eqref{eq: H an expansion} for $H_{\text{an}}$.

% \begin{remark}
% We highlight two sources of indeterminacy. 
% First, the coefficients of a polynomial of the form
% $ v = \sum_{(\mathbf{x}, \boldsymbol{\sigma})} V(\mathbf{x}, \boldsymbol{\sigma}) a_{\mathbf{x}}^{\boldsymbol{\sigma}}$ are not uniquely defined. 
% This is because
% $ a_{\mathbf{x}}^{\boldsymbol{\sigma}} = a_{\mathbf{x}'}^{\boldsymbol{\sigma}'} $
% when $\mathbf{x}'$ is a permutation of $\mathbf{x}$ and $\boldsymbol{\sigma}'$ is the corresponding permutation of $\boldsymbol{\sigma}$.
% Second, the functions $u_x^{(i)}$ are defined only up to an arbitrary element in the kernel of the map $u \mapsto -\{ H_{\har}, u \}$. 
% Below, in deriving explicit formulas, we will make a choice that respects the above symmetry. 
% \end{remark}

\subsection{Explicit Representations}

We now provide explicit expressions for the functions $u_x^{(i)}$ and $f_x^{(i)}$ that solve the iteration \eqref{eq: perturbative scheme}. Based on the previous considerations, this process is, in principle, straightforward: the relation $f_x^{(i)} = - \{H_\har, u_x^{(i)}\}$ can be inverted using \eqref{eq: def denominators} to express $u_x^{(i)}$ in terms of $f_x^{(i)}$, and $f_x^{(i+1)}$ can then be explicitly determined from $f_x^{(i+1)} = \{H_\an, u_x^{(i)}\}$ using the rules in \eqref{eq: poisson bracket rules}. However, to write a sufficiently explicit expression that will be useful later, we first need to introduce some additional terminology and notation:

\begin{enumerate}
\item 
Given $i\ge 1$, let $(\bx,\bsigma)\in\Z^{4i}\times\{\pm 1\}^{4i}$ and write
\[
\begin{cases}
\bx \; = \; (\bx_1,\dots,\bx_i), \quad x_l = (x_{l,1},\dots,x_{l,4}), \quad 1\le l \le i,\\
\bsigma \; = \; (\bsigma_1,\dots,\bsigma_i), 
\quad \sigma_l = (\sigma_{l,1},\dots,\sigma_{l,4}), \quad 1\le l \le i.
\end{cases}
\]
Given $x\in\Z$, we define the set $\mathcal X_i(x)$ of such pairs $(\bx,\bsigma)$ with the following restrictions: 
\begin{enumerate}
    \item 
    $x_{1,1},\dots,x_{1,4}\in\{x-1,x\}$ and, for every $l\in\{1,\dots,i\}$, there exists $y(l)\in\Z$ such that $x_{l1},\dots,x_{l4}\in\{y(l)-1,y(l)\}$, 

    \item 
    $
        (\bx_1,\bsigma_1)\notin \mathcal S, \;
        ((\bx_1,\bx_2),(\bsigma_1,\bsigma_2))\notin \mathcal S, \; \dots, \;
        %\text{ and }  \dots  \text{ and } 
         ((\bx_1,\dots,\bx_i),(\bsigma_1,\dots,\bsigma_i))\notin \mathcal S.
    $
\end{enumerate}
Moreover, for $(\bx,\bsigma)\in\mathcal X_i(x)$, we define the \emph{support} of $(\bx,\bsigma)$ as the smallest interval $I\subset\Z$ (for the inclusion) such that $x_{1,1},\dots, x_{i,4}\in I$.\\
The elements $(\bx,\bsigma)\in\mathcal{X}_i(x)$ will be used to index all monomials appearing in the expansion of $f_x^{(i)}$ and $u^{(i)}_x$. This can be understood given that the current $j_x$ is a homogeneous polynomial of degree 4, cf.\eqref{eq: current new variables}; that $u^{(i)}_x$ involves the same monomials as $f^{(i)}_x$ by \eqref{eq: def denominators}; and that each new step in the perturbation introduces a new $H_\an$, cf.\eqref{eq: perturbative scheme}, which is also a homogeneous polynomial of degree 4 by \eqref{eq: H an expansion}.

\item
Given $i\ge 2$, let us define the set of \emph{contractions} $\mathcal C_i$ as the set of all $4(i-1)$-tuples of positive integers of the form
\[
    s \; = \; \Big(\big(2,l(2);k(2),l'(2)\big),\dots,\big(i,l(i);k(i),l'(i)\big)\Big)
\]
with $1 \le l(j),l'(j)\le4$ and $1 \le k(j)<j$ for $2\le j \le i$, and with the constraint that the couples $(2,l(2)),(k(2),l'(2)), \dots, (i,l(i)),(k(i),l'(i))$ are all different.\\
The word contraction refers here to the use of the first rule in \eqref{eq: poisson bracket rules}, where we say that the indices $(x,\sigma)$ and $(x',\sigma')$ are \emph{contracted} through the Poisson bracket. Since, in reaching order $i$ in perturbation, we take the Poisson bracket with $H_\an$ a total of $i-1$ times, cf.~\eqref{eq: perturbative scheme}, we need to specify which indices are contracted at each step. This is the purpose of specifying $s$, as defined above and further described below.

\item 
Given $i\ge 2$, we say that $(\bx,\bsigma)\in\mathcal X_i(x)$ can be \emph{contracted} through a contraction $s\in\mathcal C_i$ if
\begin{equation}\label{eq: contraction satisfied}
    x_{2,l(2)} = x_{k(2),l'(2)}, \dots, x_{i,l(i)} = x_{k(i),l'(i)}, 
    \quad
    \sigma_{2,l(2)} = -\sigma_{k(2),l'(2)}, \dots, \sigma_{i,l(i)} = -\sigma_{k(i),l'(i)}
\end{equation}
with $s = ((2,l(2);k(2),l'(2)),\dots,(i,l(i);k(i),l'(i)))$. 
We let also $\Phi$ be a boolean function, i.e.\@ taking values in $\{0,1\}$, such that 
\[
    \Phi (\bx,\bsigma;s) \; = \; 1 
\]
if and only if $(\bx,\bsigma)$ is contracted by $s$.
Further, if $(\bx,\bsigma)$ can be contracted by a contraction $s\in\mathcal C_i$, 
we denote by $\tilde a_\bx^\bsigma(s)$ the monomial defined as in \eqref{eq: basic monomial} where the product is restricted to the non-contracted indices, i.e.\@ we omit the factors $a_{x_{j,l(j)}}^{\sigma_{j,l(j)}} a_{x_{j,l(j)}}^{-\sigma_{j,l(j)}}$ for $2 \le j \le i$.
We observe that while the degree of $a_\bx^\bsigma$ is $4i$, the degree of $\tilde a_\bx^\bsigma$ is $4 + 2(i-1)=2(i+1)$. 
\end{enumerate}

We first note
\begin{lemma}\label{lem: combinatorics}
    Let $x\in\Z$ and $i\ge 1$. 
    \begin{enumerate}
    \item
    There exists a constant $\Const$ such that %, given $i\ge 1$ and $x\in\Z$, 
    \begin{equation*}%\label{eq: counting contractions}
    \left|\left\{ (\bx,\bsigma)\in\mathcal X_i(x) : (\bx,\bsigma)\text{ can be contracted} \right\}\right| 
    \;\le\; 
    \sum_{(\bx,\bsigma)\in\mathcal X_i(x)}
    \sum_{s\in\mathcal C_i}
    \Phi (\bx,\bsigma;s)
    \; \le \; 
    \ed^{\Const i \ln (i+1)}. 
    \end{equation*}
    %\textcolor{magenta}{Maybe switch both sums in the RHS of the first inequality}

    \item
    If $(\bx,\bsigma)\in\mathcal X_i(x)$ can be contracted, then $(\bx,\bsigma)$ is supported on the interval $[x-i,x+i-1]\cap\Z$.
    %$I\subset\Z$ containing $x$ and such that $|I|\le i+1$.
    %\textcolor{magenta}{Define support of polynomial : outside of support, the polynomial is the identity operator}
    \end{enumerate}
\end{lemma}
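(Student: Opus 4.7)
The plan is to prove the two parts of Lemma~\ref{lem: combinatorics} separately, but both exploit the tree-like dependency structure imposed by a contraction $s \in \mathcal{C}_i$.

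For part (1), the first inequality is immediate: whenever $(\bx,\bsigma)$ can be contracted there exists at least one $s \in \mathcal{C}_i$ with $\Phi(\bx,\bsigma;s)=1$, so summing the indicator $\Phi$ over $s$ already dominates the cardinality. To bound the double sum I would swap the order of summation and estimate, for each fixed $s$, the number of $(\bx,\bsigma) \in \mathcal{X}_i(x)$ with $\Phi(\bx,\bsigma;s)=1$. The key observation is that the blocks $\bx_j$ and $\bsigma_j$ can be determined inductively along $j = 1, 2, \dots, i$: $\bx_1$ has at most $2^4=16$ choices since $x_{1,1},\dots,x_{1,4}\in\{x-1,x\}$; for each $j\ge 2$, the contraction $x_{j,l(j)}=x_{k(j),l'(j)}$ fixes one coordinate of $\bx_j$ in terms of the already-chosen block $\bx_{k(j)}$, which in turn restricts $y(j)$ to at most two values, after which the remaining three coordinates of $\bx_j$ each have two choices, giving at most $2\cdot 2^3 = 16$ configurations. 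Similarly $\bsigma_j$ has at most $2^3=8$ free choices for $j\ge 2$, since one sign is fixed by the contraction. This yields a per-$s$ bound of the form $C^i$. A direct count of $|\mathcal{C}_i|$, using that each $(j,l(j);k(j),l'(j))$ has at most $4\cdot(j-1)\cdot 4$ values, gives $|\mathcal{C}_i| \le 16^{i-1}(i-1)!$. Combining the two estimates with $i!\le i^i = \ed^{i\ln i}$ produces the stated bound $\ed^{Ci\ln(i+1)}$.

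For part (2), I would track the ``centers'' $y(j)$ of the blocks. The definition of $\mathcal{X}_i(x)$ forces $y(1)=x$, so it suffices to show $|y(j)-x|\le i-1$ for every $j\le i$: then the coordinates of $\bx_j$ lie in $[y(j)-1,y(j)]\subset[x-i,x+i-1]$, bounding the support. The contraction constraint $x_{j,l(j)}=x_{k(j),l'(j)}$, combined with $x_{k(j),l'(j)}\in\{y(k(j))-1,y(k(j))\}$ and $x_{j,l(j)}\in\{y(j)-1,y(j)\}$, directly forces $|y(j)-y(k(j))|\le 1$. Iterating along the chain $j\to k(j)\to k(k(j))\to\dots\to 1$, which has length at most $j-1$ since $k$ strictly decreases the index, yields $|y(j)-y(1)|\le j-1\le i-1$, as required.

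The main obstacle, such as it is, lies in choosing the right order to carry out the count in part (1): the inductive processing of blocks along $j$ crucially uses that each $(k(j),l'(j))$ refers to a block that has already been fixed, so that every new block contributes only a bounded, disorder-independent factor; a naive ``all at once'' enumeration would give much weaker bounds. Once the inductive ordering is in place, both parts reduce to elementary combinatorial bookkeeping.
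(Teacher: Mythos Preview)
Your proof is correct and follows exactly the same strategy as the paper's (very terse) proof: bound $|\mathcal C_i|$ by $\Const^i i!$, bound the number of $(\bx,\bsigma)$ compatible with a fixed $s$ by $\Const^i$ via the inductive block-by-block construction, and combine; for part~(2), exploit the nearest-neighbor locality forced by the contraction constraints to propagate $|y(j)-y(k(j))|\le 1$ down to $y(1)=x$. What the paper states in three lines you have written out in full detail, but the argument is the same.
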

\begin{proof}
    The cardinality of $\mathcal C_i$ is bounded by $\Const^i i! $ and, given a contraction $s\in\mathcal C_i$, the number of elements $(\bx,\bsigma)\in\mathcal X_i(x)$ such that $\Phi(\bx,\bsigma;s)=1$ is bounded by $\Const^i$. 
    This yields the first point.
    The second point is a consequence of the locality imposed by \eqref{eq: contraction satisfied}.
\end{proof}

We state now the main result of this section:

\begin{proposition} \label{prop: main expression f u}
    For $i\ge 1$, let 
    \begin{multline}\label{eq: explicit f i}
    f^{(i)}_x \; = \; \sum_{(\bx,\bsigma)\in\mathcal X_i(x)}
    \sum_{s\in\mathcal C_i}
    \Phi (\bx,\bsigma;s)\\
    \frac{ %(-1)^{i-1}
    \sigma_{2,k(2)}\dots \sigma_{i,k(i)}  J_x(\bx_1,\bsigma_1)V(\bx_2,\bsigma_2)\dots V(\bx_i,\bsigma_i)}{\Delta(\bx_1,\bsigma_1)\Delta((\bx_1,\bx_2),(\bsigma_1,\bsigma_2))\dots 
    \Delta((\bx_1,\dots,\bx_{i-1}),(\bsigma_1,\dots,\bsigma_{i-1}))} \tilde a_\bx^\bsigma(s)
    \end{multline}
    and
    \begin{multline}\label{eq: explicit u i}
    u^{(i)}_x \; = \; \sum_{(\bx,\bsigma)\in\mathcal X_i(x)}
    \sum_{s\in\mathcal C_i}
    \Phi(\bx,\bsigma;s)\\
    \frac{\imag %(-1)^{i-1}
    \sigma_{2,k(2)}\dots \sigma_{i,k(i)}  J_x(\bx_1,\bsigma_1)V(\bx_2,\bsigma_2)\dots V(\bx_i,\bsigma_i)}{\Delta(\bx_1,\bsigma_1)\Delta((\bx_1,\bx_2),(\bsigma_1,\bsigma_2))\dots 
    \Delta((\bx_1,\dots,\bx_{i}),(\bsigma_1,\dots,\bsigma_{i}))} \tilde a_\bx^\bsigma(s),
    \end{multline}
    {where $J_x$, $V$, and $\Delta$ are defined respectively in \eqref{eq: current new variables}, \eqref{eq: H an expansion} and \eqref{eq: def denominators}.}
    These functions solve the iteration~\eqref{eq: perturbative scheme}. 
    %Moreover, there exists a constant $\Const$ such that the number of non-zero terms in \eqref{eq: explicit f i} and \eqref{eq: explicit u i} is bounded by $\Const^i i!$.
\end{proposition}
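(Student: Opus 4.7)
I proceed by induction on $i$, establishing \eqref{eq: explicit f i} and \eqref{eq: explicit u i} simultaneously and verifying that the resulting functions satisfy the iteration \eqref{eq: perturbative scheme}.

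For the base case $i=1$, the set $\mathcal{C}_1$ consists of a single empty tuple, $\Phi$ is identically $1$, and $\tilde a_\bx^\bsigma(s) = a_{\bx_1}^{\bsigma_1}$ since no contractions occur; the denominator and sign product in \eqref{eq: explicit f i} are empty. The right-hand side therefore reduces to $\sum_{(\bx_1,\bsigma_1)\in\mathcal{X}_1(x)} J_x(\bx_1,\bsigma_1)\, a_{\bx_1}^{\bsigma_1}$, which equals $j_x$ as given in \eqref{eq: current new variables}: by the symmetry lemma of Section~4.2, $j_x$ is $\mathcal{P}$-skew-symmetric, and since $a_\bx^\bsigma$ is $\mathcal{P}$-symmetric precisely when $(\bx,\bsigma)\in\mathcal{S}$, the coefficients $J_x$ vanish on $\mathcal{S}$. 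The formula \eqref{eq: explicit u i} for $u^{(1)}_x$ then follows by inverting $-\{H_\har,\cdot\}$ monomial by monomial via \eqref{eq: def denominators}.

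For the inductive step, I assume \eqref{eq: explicit u i} at order $i$ and compute $f^{(i+1)}_x = \{H_\an, u^{(i)}_x\}$. Expanding $H_\an$ via \eqref{eq: H an expansion} and distributing the Poisson bracket using \eqref{eq: poisson bracket rules}, each Leibniz expansion selects one factor $a_{x_{i+1,l}}^{\sigma_{i+1,l}}$ from the monomial $a_{\bx_{i+1}}^{\bsigma_{i+1}}$ contributed by $H_\an$ and one uncontracted factor $a_{x_{k,l'}}^{\sigma_{k,l'}}$ of $\tilde a_\bx^\bsigma(s)$, producing the atomic bracket $-\imag\sigma_{i+1,l}\,\delta(\sigma_{i+1,l}+\sigma_{k,l'})\delta(x_{i+1,l}-x_{k,l'})$. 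The two Kronecker deltas enforce exactly the new contraction constraint \eqref{eq: contraction satisfied} at step $i+1$, so the resulting sum is naturally parametrised by $\mathcal{X}_{i+1}(x)\times\mathcal{C}_{i+1}$ with new pair $(\bx_{i+1},\bsigma_{i+1})$ and new entry $(i+1,l(i+1);k(i+1),l'(i+1))$. The numerator acquires the factor $V(\bx_{i+1},\bsigma_{i+1})$ from $H_\an$ together with the sign supplied by the atomic bracket; the denominator is unchanged from $u^{(i)}_x$; and the formula matches \eqref{eq: explicit f i} at order $i+1$. Applying \eqref{eq: def denominators} monomial by monomial then adjoins the factor $\imag/\Delta((\bx_1,\dots,\bx_{i+1}),(\bsigma_1,\dots,\bsigma_{i+1}))$ and yields \eqref{eq: explicit u i} at order $i+1$.

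The principal subtlety, which I expect to be the main obstacle, concerns the restriction $((\bx_1,\dots,\bx_{i+1}),(\bsigma_1,\dots,\bsigma_{i+1}))\notin\mathcal{S}$ encoded in part (b) of the definition of $\mathcal{X}_{i+1}(x)$. The raw Leibniz expansion also produces contributions indexed by such pairs, which are not in the image of $-\{H_\har,\cdot\}$ and would obstruct the subsequent inversion needed to define $u^{(i+1)}_x$. These spurious contributions must vanish in aggregate, which follows from the $\mathcal{P}$-skew-symmetry of $f^{(i+1)}_x$ established in Section~4.2: since $a_\bx^\bsigma$ is $\mathcal{P}$-symmetric whenever $(\bx,\bsigma)\in\mathcal{S}$, the net coefficient of any such monomial in a $\mathcal{P}$-skew-symmetric polynomial must be zero, and restricting the sum to $\mathcal{X}_{i+1}(x)$ is therefore consistent. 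Apart from this cancellation, the remaining work is bookkeeping of the signs, factors of $\imag$, and the combinatorial matching of the iterated Leibniz expansions to $\mathcal{C}_{i+1}$.
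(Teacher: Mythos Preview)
Your proposal is correct and follows essentially the same approach as the paper's proof: both verify the base case $i=1$ via \eqref{eq: current new variables} and $\mathcal P$-skew-symmetry, obtain \eqref{eq: explicit u i} from \eqref{eq: explicit f i} by the diagonal action \eqref{eq: def denominators} (the paper makes slightly more explicit than you do that $\tilde a_\bx^\bsigma(s)$ carries the \emph{same} eigenvalue $-\imag\Delta(\bx,\bsigma)$ as $a_\bx^\bsigma$, since contracted pairs contribute zero to $\Delta$), and recover \eqref{eq: explicit f i} at order $i+1$ by Leibniz-expanding $\{H_\an,u^{(i)}_x\}$ and reindexing over $\mathcal X_{i+1}(x)\times\mathcal C_{i+1}$. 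Your identification and resolution of the ``$\notin\mathcal S$'' restriction via $\mathcal P$-skew-symmetry is exactly the argument the paper invokes.
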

%\textcolor{magenta}{Maybe include the iterative formula: equation (4.13) in Oskar's paper}

For $i=1$, the above expressions are interpreted as follows: the sum over contractions, the factors $\sigma$ the factors of $V$, and denominators in \eqref{eq: explicit f i} and function $\Phi$ are all absent, and $\tilde a_\bx^\bsigma(s)$ stands for $a_\bx^\bsigma$. 

\begin{proof}
    First, the expression \eqref{eq: explicit f i} for $f^{(1)}_x$ is a rewriting of \eqref{eq: current new variables}, where we note that the summation can be restricted to $(\bx,\bsigma)\notin\mathcal S$  since $j_x$ is $\mathcal P$-skew-symmetric.
    Second, the expression \eqref{eq: explicit u i} for $u^{(i)}_x$ follows from the expression \eqref{eq: explicit f i} for $f^{(i)}_x$ for all $i\ge 1$. 
    Indeed, by the spectral decomposition \eqref{eq: def denominators} and by the definition of $\tilde a_\bx^\bsigma (s)$, we find 
    \[
        -\{ H_\har,\tilde a_\bx^\bsigma (s)\} \; = \; - \imag \Delta (\bx,\bsigma) \tilde a_\bx^\bsigma (s)
    \]
    for any $(\bx,\bsigma)\in\mathcal X_i(x)$ and $s\in\mathcal C_i$.
    In other words, $\tilde a_\bx^\bsigma(s)$ and $a_\bx^\bsigma$ are eigenvectors of the map $u\mapsto -\{H_{\har},u\}$ with the same eigenvalue $\imag \Delta(\bx,\bsigma)$.
    Therefore, the expression \eqref{eq: explicit u i} for $u^{(i)}_x$ ensures that the required relation $f^{(i)}_x = - \{ H_{\har}, u_x^{(i)} \}$ in \eqref{eq: perturbative scheme} is satisfied.

We still need to show that the expression~\eqref{eq: explicit u i} for \( u_x^{(i)} \) leads to the expression~\eqref{eq: explicit f i} for \( f_x^{(i+1)} \). 
We start with the second expression in \eqref{eq: perturbative scheme}, substituting the representation \eqref{eq: H an expansion} for \( H_\an \) and the representation~\eqref{eq: explicit u i} for \( u_x^{(i)} \). To simplify notation, we denote the entire fraction in the representation~\eqref{eq: explicit u i} of \( u_x^{(i)} \) as \( U_x^{(i)}(\bx, \bsigma) \). This gives
\begin{multline*}
f^{(i+1)}_x = \{H_\an, u_x^{(i)}\} = 
\sum_{(\bx',\bsigma')} \sum_{(\bx'',\bsigma'') \in \mathcal{X}_i} 
\sum_{l, (j, l(j))} \sum_{s'' \in \mathcal{C}_i} \\
\delta(x'_l - x''_{j,l(j)}) \delta(\sigma'_l + \sigma''_{j,l(j)}) 
\Phi(\bx'', \bsigma''; s'') (-\imag) \sigma'_l V(\bx', \bsigma') 
U^{(i)}(\bx'', \bsigma'') 
\tilde{a}_{(\bx'', \bx')}^{(\bsigma'', \bsigma')}(s'', l).
\end{multline*}
%\textcolor{magenta}{Mention that $(i+1,l)$ and $(j,l(j))$ are different from all the couples in $s''$.}
Here, primed variables are used for the components of \( H_\an \), and double-primed variables are used for the components of \( u_x^{(i)} \). The third summation is over \( 1 \leq l \leq 4 \) and all pairs \( (j, l(j)) \) indexing the non-contracted variables of \( (\bx'', \bsigma'') \). Finally, \( \tilde{a}_{(\bx'', \bx')}^{(\bsigma'', \bsigma')}(s'', l) \) represents the product \( a_{\bx'}^{\bsigma'} \tilde{a}_{\bx''}^{\bsigma''} / a_{x'_l}^+ a_{x'_l}^- \).

    % We still need  to show that the expression~\eqref{eq: explicit u i} for $u_x^{(i)}$ implies the expression~\eqref{eq: explicit f i} for $f_x^{(i+1)}$. 
    % We start from the second expression in \eqref{eq: perturbative scheme}, using the representation \eqref{eq: H an expansion} for $H_\an$ and the representation~\eqref{eq: explicit u i} for $u_x^{(i)}$, and denoting the whole fraction in there as $U_x^{(i)}(\bx,\bsigma)$ for short: 
    % \begin{multline*}
    % f^{(i+1)}_x \; = \; \{H_\an,u_x^{(i)}\} \; = \; 
    % \sum_{(\bx',\bsigma')}\sum_{(\bx'',\bsigma'')\in \mathcal X_i}
    % \sum_{l,(j,l(j))}\sum_{s''\in\mathcal C_i} \\
    % \delta(x'_{l} - x''_{j,l(j)}) \delta(\sigma'_l + \sigma''_{j,l(j)})  \Phi(\bx'',\bsigma'';s'') 
    % \imag\sigma'_l V(\bx',\bsigma') U^{(i)}(\bx'',\bsigma'') 
    % \tilde a_{(\bx'',\bx')}^{(\bsigma'',\bsigma')}(s'',l)
    % \end{multline*}
    % where we have used primed variables for the components of $H_\an$, double primed variables for the components of $u_x^{(i)}$, 
    % where the third summation is taken over  $1\le l \le 4$ and over all couples $(j,l(j))$ that index non-contracted variables of $(\bx'',\bsigma'')$, 
    % and where $ \tilde a_{(\bx'',\bx')}^{(\bsigma'',\bsigma')}(s'',l)$ represents the product $a_{\bx'}^{\bsigma'}\tilde a_{\bx''}^{\bsigma''}/a_{x'_l}^+ a_{x'_l}^-$.
    
    In the above expression, the summations over $(\bx',\bsigma')$ and $(\bx'',\bsigma'')\in\mathcal X_i(x)$ can be gathered into a single summation over $(\bx,\bsigma)\in\mathcal X_{i+1}$ with $\bx = (\bx'',\bx')$ and $\bsigma = (\bsigma'',\bsigma')$
    provided we restrict $(\bx,\bsigma)\notin\mathcal S$, which we can do knowing that $f^{(i+1)}$ is $\mathcal P$-skew-symmetric. 
    Similarly, the two summations over contractions can be gathered into a single summation over contractions $s\in\mathcal C_{i+1}$, while the the product over the two delta factors and the boolean factor $\Phi(\bx'',\bsigma'',s'')$ writes as a single factor $\Phi (\bx,\bsigma;s)$.  
    Finally, the product $ (-\imag)\sigma'_l V(\bx',\bsigma') U^{(i)}(\bx'',\bsigma'')$ is identified with the whole fraction in \eqref{eq: explicit f i} with $i$ replaced by $i+1$ and the monomial $ \tilde a_{(\bx'',\bx')}^{(\bsigma'',\bsigma')}(s'',l)$ is identified with $\tilde a_\bx^\bsigma$. 
    This concludes the proof of the proposition. 
    %
  %  It remains to see that the number of non-zero terms in \eqref{eq: explicit f i} and \eqref{eq: explicit u i} is bounded by $\Const^i i!$.
    %For this, we note that the cardinality of $\mathcal C_i$ is bounded by $\Const^i i! $ and that, given a contraction $s\in\mathcal C_i$, the number of elements $(\bx,\bsigma)\in\mathcal X_i(x)$ such that $\Phi(\bx,\bsigma;s)=1$ is bounded by $\Const^i$. 
\end{proof}

\section{Probabilistic Bounds}
We deal here with the probability of \emph{resonances}, corresponding to events where the denominator in the expressions \eqref{eq: explicit f i} or \eqref{eq: explicit u i} becomes smaller than some threshold. 
Let a positive integer $n$ be given for this whole section, that corresponds to the order in perturbation appearing in \eqref{eq: solutions u and f order n}. 
Let also $0<\delta<1$ be our resonant threshold.

\subsection{Probability of Resonances}
We state the primary probabilistic result to be applied in this section:

\begin{lemma}\label{lem: probability small denominator}
    Let $d\ge 1$ be an integer and let $(\bx,\bsigma)\notin \mathcal S$ with $d(\bx,\bsigma) = d$. There exists a constant $\Const$ such that
    %for any $\eta \ge 0$, 
    \[
        \Proba(|\Delta(\bx,\bsigma)|\le\delta) \; \le \; \Const  \delta. 
    \]
\end{lemma}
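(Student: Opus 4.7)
The plan is to reduce the estimate to a one-dimensional anti-concentration bound for a linear combination of independent random variables with bounded density. First I would rewrite $\Delta(\bx,\bsigma)$ by collecting terms according to the distinct values of the $x_k$: letting $Y = \{x_1,\dots,x_d\}$ and $c_y = \sum_{k : x_k = y} \sigma_k \in \Z$ for $y \in Y$, we have
\[
    \Delta(\bx,\bsigma) \; = \; \sum_{y \in Y} c_y \, \omega_y.
\]
The crucial observation is that $(\bx,\bsigma)\notin \mathcal S$ forces some coefficient $c_{y^*}$ to be nonzero. Indeed, if $c_y = 0$ for every $y \in Y$, then at each site $y$ the indices $k$ with $x_k = y$, $\sigma_k = +1$ could be paired one-to-one with those having $\sigma_k = -1$, yielding a partition of $\{1,\dots,d\}$ certifying $(\bx,\bsigma)\in \mathcal S$, a contradiction.

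Next I would condition on the family $(\omega_x)_{x\neq y^*}$ and treat $\omega_{y^*}$ as the only random quantity. Because $\omega_{y^*}^2$ has a smooth density supported on $[\omega_-^2,\omega_+^2]$ with $\omega_-^2 > 0$, the change of variable $u\mapsto \sqrt u$ is smooth on that compact interval, so $\omega_{y^*}$ itself admits a density $\rho$ on $[\omega_-,\omega_+]$ which is in particular bounded: $\|\rho\|_\infty \le \Const$. Conditionally, $\Delta(\bx,\bsigma) = c_{y^*}\omega_{y^*} + A$ for a constant $A$ depending only on the conditioning variables, so
\[
    \Proba\bigl(|\Delta(\bx,\bsigma)| \le \delta \,\bigm|\, (\omega_x)_{x\ne y^*}\bigr)
    \; \le \; \Proba\bigl(\omega_{y^*} \in [(-A-\delta)/c_{y^*},(-A+\delta)/c_{y^*}]\bigr)
    \; \le \; \frac{2\delta}{|c_{y^*}|}\|\rho\|_\infty.
\]
Since $c_{y^*}$ is a nonzero integer, $|c_{y^*}|\ge 1$, so the right-hand side is at most $2\Const \delta$. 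Taking expectation over the conditioning variables yields the stated bound.

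There is no real obstacle here; this is a standard Wegner-type anti-concentration argument. The only minor subtlety is passing from the smooth density of $\omega_x^2$ to a bounded density for $\omega_x$, which is legitimate precisely because the support is bounded away from $0$. I note that the argument also makes clear that the constant $\Const$ depends only on $\|\rho\|_\infty$, and in particular is uniform in $d$ and in the particular tuple $(\bx,\bsigma)$, which is what will be needed when this estimate is eventually combined via union bounds with the combinatorial count from Lemma~\ref{lem: combinatorics}.
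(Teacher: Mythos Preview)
Your proof is correct and follows essentially the same route as the paper's: rewrite $\Delta(\bx,\bsigma)$ as an integer linear combination of the distinct $\omega_y$, isolate one variable with nonzero coefficient, and use that $\omega_y$ has bounded density to get a one-dimensional anti-concentration bound. The paper phrases the last step as a change of variables rather than as conditioning, but the content is identical; you are in fact slightly more careful in justifying why $\omega_y$ (as opposed to $\omega_y^2$) has a bounded density.
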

\begin{proof}
    Since $(\bx,\bsigma)\notin\mathcal S$, $\Delta(\bx,\bsigma)$ takes the form
    \[
    \Delta(\bx,\bsigma) 
    \; = \;
    r_1 \omega_{y_1} + \dots + r_m \omega_{y_m}
    \]
    for some positive integer $m\le d$, some non-zero integers $r_1,\dots,r_m$ with $\sum_{i=1}^m|r_i|\le d$, and different points $y_1,\dots,y_m$.
    Using the change of variables
    \[
    \nu_1 = \omega_{y_1},\quad \dots,\quad \nu_{m-1} = \omega_{y_{m-1}}, 
    \quad
    \nu_m = r_1 \omega_{y_1} + \dots + r_m \omega_{y_m},
    \]
    and denoting by $\rho$ the density of $\omega_x$, 
    we compute
    \begin{align*}
    \Proba(|\Delta(\bx,\bsigma)|\le \delta)
    \; &= \; 
    \int\dd \omega_{y_1} \rho(\omega_{y_1}) \dots \int \dd \omega_{y_m}\rho(\omega_{y_m})
    1_{\{|\Delta(\bx,\bsigma)|\le \delta\}}\\
    \; &\le \; 
    \int \dd \nu_1 \rho(\nu_1) \dots \int \dd \nu_{m-1} \rho(\nu_{m-1})
    \int\dd \nu_m\rho(\omega_{y_m})
    1_{\{|\nu_m|\le \delta\}} \\
    \; &\le \; 
    \Const \int \dd \nu_1 \rho(\nu_1) \dots \int \dd \nu_{m-1} \rho(\nu_{m-1})
    \int\dd \nu_m 1_{\{|\nu_m|\le \delta\}} \\
    \; & \le \; \Const \delta
    \end{align*}
    where we have used the fact that $|\det (\partial \omega/\partial\nu)| = 1/|r_m| \le 1$, where $\partial\omega/\partial\nu$ denotes the Jacobian matrix.
    %
    % \textcolor{magenta}{Notation: $\rho(\omega_{y_m})$?} \textcolor{magenta}{Clarify the proof: Last integral change of 1d variable with $1/r_m$?; }
    %
    %
    % Since $\rho$ is compactly supported on $[\omega_-,\omega_+]$, we find that the last integrand is possibly non-zero only for $|\nu_m| \le \omega_+ d$: 
    % \[
    % \rho (\nu_1)\dots \rho(\nu_{m-1})\rho(\omega_{y_m}) \; > \; 0
    % \quad \Rightarrow \quad
    % |\nu_m| \le |r_1| \omega_{y_1} + \dots + |r_m| \omega_{y_m} \; \le \; d \omega_+.
    % \]
    % %
    % % vanishes for $|\nu_m|  \notin [d\omega_-,d\omega_+]$. 
    % % Indeed $\omega_- \le \omega_{y_m} \le \omega_+$ implies
    % % \[
    % % d\omega_- \; \le \; r_1 \nu_1 + \dots r_{m-1}\nu_{m-1} + r_m \omega_- 
    % % \; \le \; \nu_m \; \le \;
    % % r_1 \nu_1 + \dots r_{m-1}\nu_{m-1} + r_m \omega_- \; \le \; d \omega_+.
    % % \]
    % Since $\rho$ is also bounded and compactly supported, we upper-bound the last integral by $\Const d \delta$. Performing the remaining integrals yields the claim.
    %
    % \textcolor{magenta}{If we take less rough bounds, we can omit the factor $d$ in lemma 5}
\end{proof}

\subsection{Resonant Points and Resonant Intervals}
We say that a point $x\in\Z$ is \emph{$(n,\delta)$-resonant} if
\[
    |\Delta(\bx,\bsigma)| \;\le\; \delta. 
\]
for some $(\bx,\bsigma)\in\mathcal X_i(x)$, with $i\in\{1,\dots,n\}$,  that can be contracted, i.e.\@ for which \eqref{eq: contraction satisfied} is satisfied for some contraction in $\mathcal C_i$.
Let us make three observations related to this definition: 
\begin{enumerate}
    \item 
    If a point $x$ is not $(n,\delta)$-resonant, then the denominators in \eqref{eq: explicit f i} and \eqref{eq: explicit u i} are lower-bounded in absolute value by $\delta^{i-1}$ and $\delta^i$ respectively. 

    \item\label{item: probability x non resonant}
    There exists a constant $\Const_1$ such that, given any $x\in\Z$,
    \begin{equation}\label{eq: probability x resonant}
    \Proba(\text{$x$ is $(n,\delta)$-resonant}) 
    \; \le \; 
    \ed^{\Const_1 n \ln (n+1)} \delta.
    \end{equation}
    This follows from the first point of Lemma~\ref{lem: combinatorics} and Lemma~\ref{lem: probability small denominator}.

    \item\label{item: event x resonant}
    The event that $x$ is $(n,\delta)$-resonant only depends only on the disorder in the interval $[x-n,x+n-1]\subset\Z$. 
    This follows from the second point of Lemma~\ref{lem: combinatorics}. %\textcolor{magenta}{? probably true but not a direct consequence of point 2, lemma 4}
\end{enumerate}

Next, we say that an interval $I\subset \Z$ is $(n,\delta)$-resonant if every point in $I$ is $(n,\delta)$-resonant. 
Given $x\in\Z$, we denote by $R(n,\delta;x)\subset\Z$ the largest interval (for the inclusion) such that $x\in R(n,\delta;x)$ and $R(n,\delta;x)$ is $(n,\delta)$-resonant. 
Given $x\in\Z$ and $\ell \ge 1$, we find that 
\begin{align}\label{eq: probability interval resonant}
    \Proba(|R(n,\delta;x)|\ge \ell) 
    \; &\leq \; \sum_{y=x-\ell}^x \Proba\left([y,y+\ell]\subset R(n,\delta;x)\right) \nonumber \\
    &\leq \; \sum_{y=x-\ell}^x \Proba\left(\bigcap_{\substack{z\in[y,y+\ell] \\ z-y \text{ is a multiple of $2(n+1)$}}} \{z \text{ is $(n,\delta)$-resonant}\}\right)\nonumber\\
    &\le \; 
    \ell \left( \ed^{\Const_1 n \ln (n+1)} \delta \right)^{\ell/2(n+1)} ,
\end{align}
where the last inequality follows from points~\ref{item: probability x non resonant} and \ref{item: event x resonant} above. 
%Finally, let us define for brevity 
%\[
%    R_t(n,\delta) \; := \; R(n,\delta;x(t))
%\]
%with $x(t)$ defined in \eqref{eq: def of x(t)}. 

We now show the main result of this section. 
Recall the definition of $x(t)$ in \eqref{eq: def of x(t)}, as well as the definitions of $\randomt_\epsilon$ and $\randomt_{\epsilon,\epsilon'}$ in \eqref{eq: t epsilon} and \eqref{eq: t epsilon epsilon prime} respectively. Let also $c$ be the constant in Lemma~\ref{lem: a priori bound}.

\begin{proposition}\label{pro: conclusion probabilistic bound}
    Provided $n$ and $\delta$ are such that 
    \begin{equation}\label{eq: condition n and delta}
        \left( \ed^{(4+\Const_1\ln (n+1))(n+1)} \delta \right)^{\frac{1}{2(n+1)}}
        \;\le\; \frac12,  
    \end{equation}
    and provided $\epsilon'$ is small enough so that $\ln(c/\epsilon')\ge 1$, then
    \[
    \Proba\left(|R(n,\delta;x(\randomt_{\epsilon,\epsilon'}))| \ge \ln \randomt_{\epsilon,\epsilon'}\right)
    \; \le \; 
    \frac{\Const}{\epsilon'} 
    \left( \ed^{(4+\Const_1\ln (n+1))(n+1)} \delta \right)^{\frac{\ln(c/\epsilon')}{2(n+1)}}.
    %(\ed^{2+\Const_1 n \ln (n+1)}\delta)^{\frac{\ln(c/\epsilon')}{2(n+1)}}.
    \]
\end{proposition}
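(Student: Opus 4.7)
The plan is to constrain the random pair $(\randomt_{\epsilon,\epsilon'}, x(\randomt_{\epsilon,\epsilon'}))$ to a deterministic region using the a priori bounds from Lemmas~\ref{lem: light cone} and~\ref{lem: a priori bound}, and then to apply the resonance estimate~\eqref{eq: probability interval resonant} via a dyadic decomposition in time combined with a union bound in space.

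First I would establish that $\randomt_{\epsilon,\epsilon'}\ge c/\epsilon'$. By continuity of $M$ together with $\epsilon<\epsilon'$, the first hitting time $\randomt_{\epsilon'}$ lies in $[0,\randomt_\epsilon]$ with $M(\randomt_{\epsilon'})=\epsilon'$, so it is a valid candidate for the maximum defining $\randomt_{\epsilon,\epsilon'}$; Lemma~\ref{lem: a priori bound} then gives $\randomt_{\epsilon,\epsilon'}\ge\randomt_{\epsilon'}\ge c/\epsilon'$. Moreover $M(\randomt_{\epsilon,\epsilon'})\ge\epsilon'$ by definition, so Lemma~\ref{lem: light cone} yields $|x(\randomt_{\epsilon,\epsilon'})|\le\Const\,\randomt_{\epsilon,\epsilon'}/\epsilon'$.

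Next I would introduce the dyadic times $T_k:=2^k c/\epsilon'$ for integer $k\ge 0$. On the event in question, $\randomt_{\epsilon,\epsilon'}\in[T_k,T_{k+1}]$ for some $k$, so $\ln\randomt_{\epsilon,\epsilon'}\ge\ln T_k$ and $|x(\randomt_{\epsilon,\epsilon'})|\le 2\Const\,T_k/\epsilon'$. The event is therefore contained in $\bigcup_{k\ge 0}\bigcup_{|X|\le 2\Const\,T_k/\epsilon'}\{|R(n,\delta;X)|\ge\ln T_k\}$, and a union bound combined with~\eqref{eq: probability interval resonant} (with $\ell=\lceil\ln T_k\rceil$) gives, upon writing $A:=(\ed^{\Const_1 n\ln(n+1)}\delta)^{1/(2(n+1))}$,
\[
\Proba\left(|R(n,\delta;x(\randomt_{\epsilon,\epsilon'}))|\ge\ln\randomt_{\epsilon,\epsilon'}\right)
\;\le\;\sum_{k\ge 0}\frac{\Const\,T_k\,\ln T_k}{\epsilon'}\,A^{\ln T_k}.
\]

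Finally, I would sum this series using condition~\eqref{eq: condition n and delta}. Setting $\tilde A:=(\ed^{(4+\Const_1\ln(n+1))(n+1)}\delta)^{1/(2(n+1))}$, a direct algebraic check gives $\tilde A/A\ge \ed^2$, hence $A^{\ln T_k}\le\tilde A^{\ln T_k}T_k^{-2}$. Substituting turns each term into $\frac{\Const\,\ln T_k}{\epsilon'\,T_k}\tilde A^{\ln T_k}$; with $T_k=2^k c/\epsilon'$ and $\tilde A\le 1/2$, the resulting sum is bounded, up to absolute constants, by $\frac{\ln(c/\epsilon')}{c}\tilde A^{\ln(c/\epsilon')}$, and the stated bound follows from the elementary inequality $\ln(c/\epsilon')\le c/\epsilon'$. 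The main obstacle is the entanglement of the two random quantities $\randomt_{\epsilon,\epsilon'}$ and $x(\randomt_{\epsilon,\epsilon'})$: the dyadic-in-time/union-in-space scheme disentangles them, and condition~\eqref{eq: condition n and delta} is calibrated precisely so that the gap $A\le\tilde A/\ed^2$ delivers two powers of $T_k^{-1}$---one to absorb the volume factor $T_k/\epsilon'$ from the light-cone, the other to make the dyadic sum converge.
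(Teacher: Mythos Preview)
Your proposal is correct and follows essentially the same approach as the paper: both proofs use Lemma~\ref{lem: a priori bound} to get $\randomt_{\epsilon,\epsilon'}\ge c/\epsilon'$, Lemma~\ref{lem: light cone} to confine $x(\randomt_{\epsilon,\epsilon'})$, then a geometric decomposition in time combined with a spatial union bound and~\eqref{eq: probability interval resonant}. The only differences are cosmetic---the paper slices in unit steps of $\ln\randomt_{\epsilon,\epsilon'}$ (i.e.\ base $\ed$) rather than your base-$2$ dyadic times, and absorbs the volume factor $\ed^j j$ directly into the bracket in~\eqref{eq: probability interval resonant} rather than phrasing it as your ratio $\tilde A/A\ge \ed^2$; these are the same mechanism.
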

\begin{proof}
    For brevity, we write $R(x)$ instead of $R(n,\delta;x)$. 
    %Since $\randomt_{\epsilon,\epsilon'}\ge \randomt_{\epsilon'}$, 
    By Lemma~\ref{lem: a priori bound}, we know that $\randomt_{\epsilon,\epsilon'} \ge \randomt_{\epsilon'} \ge {c}/{\epsilon'}$. % and $c/\epsilon'\ge \ed^{1}$ by assumption.
    Hence we can bound 
    \begin{equation}\label{eq: 1st bound proof probabilistic}
    \Proba\left(|R(x(\randomt_{\epsilon,\epsilon'}))| \ge \ln \randomt_{\epsilon,\epsilon'}\right)
    \; \le \; 
    \sum_{j\ge \ln (c/\epsilon')} \Proba\left(R(x(\randomt_{\epsilon,\epsilon'})) \ge j,\; j \le \ln \randomt_{\epsilon,\epsilon'} < j+1 \right)
    \end{equation}
    where the sum runs over all $j$ of the form $j=\ln (c/\epsilon') + u$ with $u$ a non-negative integer. 
    By Lemma~\ref{lem: light cone} and since $M(\randomt_{\epsilon,\epsilon'}) = \epsilon'$, the constraint $\randomt_{\epsilon,\epsilon'}< \ed^{j+1}$ implies $|x(\randomt_{\epsilon,\epsilon'})|\le \Const \ed^{j}/\epsilon'$. 
    Using~\eqref{eq: probability interval resonant} by taking a union bound over all possible positions for $x(\randomt_{\epsilon,\epsilon'})$ yields 
    \begin{align*}
    \Proba\left( R(x(\randomt_{\epsilon,\epsilon'})) \ge j,\; \randomt_{\epsilon,\epsilon'} < \ed^{j+1} \right) 
    \; &\le \; \sum_{x:|x|\le \Const \ed^j/\epsilon'} 
    \Proba (R(x) \ge j) 
    \; \le \; 
    \frac{\Const \ed^{j} j}{\epsilon'} 
     \left( \ed^{\Const_1 n \ln (n+1)} \delta \right)^{\frac{j}{2(n+1)}} \\
    \; &\le \; 
    \frac{\Const}{\epsilon'} 
     \left( \ed^{(4+\Const_1\ln (n+1))(n+1)} \delta \right)^{\frac{j}{2(n+1)}}  .
    %\; \le \;
    %\left(C'^n n! \delta^{1/n} \right)^{j/2n}
    \end{align*}
    Inserting this bound into \eqref{eq: 1st bound proof probabilistic}, and using the assumption~\eqref{eq: condition n and delta} as well as $\ln (c/\epsilon')\ge 1$, yields the claim. 
\end{proof}

\section{Concluding the Proof}\label{sec: concluding the proof}

From now on, we relate the parameters $n$, $\delta$ and $\epsilon'$ to $\epsilon$: 
\begin{equation}\label{eq: defining parameter}
    n \; = \; \left\lceil(\ln (1/\epsilon))^{\theta}\right\rceil, 
    \qquad 
    \delta \; = \; \epsilon^{\eta}, 
    \qquad 
    \epsilon' \; = \; \epsilon^{1-\eta}.
    %\qquad \text{with} \qquad \eta \; = \; 1/10.
\end{equation}
Here, we take 
\begin{equation}\label{eq: values of eta and theta}
    \theta \; = \; \frac13 \qquad \text{and} \qquad \eta \; = \; \frac1{10}
\end{equation}
but we find it convenient to keep using the symbols $\theta$ and $\eta$ in most places. 
%For simplicity, we will not write the dependence on $\epsilon$ explicitly. 

\subsection{Improved Lower Bound on $M(t)$}
We establish here the following lemma, which represents a significant improvement over Lemma~\ref{lem: a priori bound} on an event of nonzero probability:
\begin{lemma}\label{lem: probabilistic improvement on the a priori bound}
  
    There exists a constant $\epsilon_0$ such that for $\epsilon \le \epsilon_0$,  
    \begin{equation}\label{eq: definition of phi}
    \randomt_\epsilon \; \ge \; \varphi(\epsilon) 
    \; := \; \ed^{\frac{1}{2} (\ln(1/\epsilon))^{1+\theta}}
    \end{equation}
    on the event $|R(n,\delta;x(\randomt_{\epsilon,\epsilon'}))| \le \ln \randomt_{\epsilon,\epsilon'}$.
\end{lemma}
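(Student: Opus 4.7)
The plan is to argue by contradiction: on the event $\{|R(n,\delta;\bar{x})| \le \ln\randomt_{\epsilon,\epsilon'}\}$, writing $\bar{x} := x(\randomt_{\epsilon,\epsilon'})$, assume $\randomt_\epsilon < \varphi(\epsilon)$ and combine the continuity equation \eqref{eq: continuity equation} on a small interval around $\bar{x}$ with the fluctuation-dissipation decomposition from Proposition~\ref{prop: main expression f u} applied at its non-resonant boundary sites.

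First I would select an interval $[a,b] \ni \bar{x}$ with both $a-1$ and $b+1$ non-resonant: specifically, take $[a,b] = R(n,\delta;\bar{x})$ when $\bar{x}$ sits in a non-trivial resonance component, whose flanks $a-1,b+1$ are then non-resonant by maximality; the case $\bar{x}$ itself non-resonant is handled analogously with a cosmetic adjustment of the interval. In both cases $|[a-1,b]| \le \ln\randomt_{\epsilon,\epsilon'} + 2$ on the event. Integrating \eqref{eq: continuity equation} on $[\randomt_{\epsilon,\epsilon'},\randomt_\epsilon]$ and summing over $[a-1,b]$ yields
\[
    \int_{\randomt_{\epsilon,\epsilon'}}^{\randomt_\epsilon} \bigl(j_{a-1}(s) - j_{b+1}(s)\bigr)\,\dd s \;=\; \sum_{x=a-1}^{b}\bigl(H_x(\randomt_\epsilon) - H_x(\randomt_{\epsilon,\epsilon'})\bigr).
\]
Since $H_{\bar{x}}(\randomt_{\epsilon,\epsilon'}) = \epsilon'$ while $H_x(\randomt_\epsilon) \le \epsilon$, the right-hand side is at most $(\ln\randomt_{\epsilon,\epsilon'}+2)\epsilon - \epsilon' \le -\epsilon'/2$ for $\epsilon$ small, using that $\epsilon(\ln(1/\epsilon))^{1+\theta}$ is of much smaller order than $\epsilon' = \epsilon^{1-\eta}$ and that $\ln\randomt_{\epsilon,\epsilon'} \le \ln\varphi(\epsilon)$ under the standing assumption. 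Hence at least one of the non-resonant boundary currents satisfies $|\int j_y\,\dd s| \ge \epsilon'/4$ for some $y \in \{a-1,b+1\}$.

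Next I would bound $\int j_y\,\dd s$ from above via $j_y = -\{H,u_y\} + g_y$. Non-resonance of $y$ lower-bounds every denominator in \eqref{eq: explicit u i}--\eqref{eq: explicit f i} by $\delta^i$ and $\delta^{i-1}$ respectively; each monomial $\tilde a_\bx^\bsigma$ of degree $2(i+1)$ is bounded by $\Const^{i+1}(\epsilon')^{i+1}$ since $H_x(t) \le M(t) \le \epsilon'$ on the interval; and Lemma~\ref{lem: combinatorics} controls the number of indexing tuples. This gives
\[
    \|u_y\|_\infty \;\le\; \sum_{i=1}^n \ed^{\Const i\ln(i+1)}\,(\epsilon')^{i+1}/\delta^i, \qquad \|g_y\|_\infty \;\le\; \ed^{\Const n\ln(n+1)}\,(\epsilon')^{n+2}/\delta^n.
\]
With $\epsilon'/\delta = \epsilon^{1-2\eta}\to 0$, the series for $u_y$ is dominated by its first term $\Const(\epsilon')^2/\delta = o(\epsilon')$, so $2\|u_y\|_\infty < \epsilon'/8$ for small $\epsilon$. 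The identity $\int j_y\,\dd s = -(u_y(\randomt_\epsilon) - u_y(\randomt_{\epsilon,\epsilon'})) + \int g_y\,\dd s$ together with $\randomt_\epsilon - \randomt_{\epsilon,\epsilon'} \le \randomt_\epsilon < \varphi(\epsilon)$ then forces
\[
    \varphi(\epsilon)\cdot \ed^{\Const n\ln(n+1)}(\epsilon')^{n+2}/\delta^n \;\ge\; \epsilon'/8.
\]
Writing $L := \ln(1/\epsilon)$ and taking logarithms, this reduces to $\tfrac{1}{2} L^{1+\theta} \ge \bigl[n(1-2\eta) + (1-\eta)\bigr]L - \Const n\ln(n+1) - \Const$. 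For $n = \lceil L^{1/3}\rceil$ and $\eta = 1/10$, the right-hand side is asymptotically $\tfrac{4}{5} L^{1+\theta}$, strictly larger than $\tfrac{1}{2} L^{1+\theta}$ for $L$ large, which is the desired contradiction.

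The main obstacle is the simultaneous tuning of $n$, $\delta$ and $\epsilon'$ against $\epsilon$: convergence of the perturbation series forces $\epsilon' \ll \delta$, smallness of the integrated remainder $\randomt_\epsilon\|g_y\|_\infty$ requires pushing $n$ as large as possible, and the combinatorial blow-up $\ed^{\Const n\ln(n+1)}$ constrains $n$ to grow no faster than a small power of $L$. The choice $\theta = 1/3$, $\eta = 1/10$ is a convenient non-optimal point at which all three constraints are met, and the resulting exponential-in-$L^{1+\theta}$ improvement over Lemma~\ref{lem: a priori bound} is precisely the mechanism producing the sub-power-law decay in Theorem~\ref{th: main theorem}.
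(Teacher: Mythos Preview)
Your argument is correct and follows essentially the same route as the paper: integrate the continuity equation over the resonant interval around $\bar x$, invoke the decomposition \eqref{eq: approximate commutator equation} at the non-resonant flanks $a-1$ and $b+1$, bound $u$ and $g$ via Lemma~\ref{lem: combinatorics} and the non-resonance condition, and compare against the energy drop $\epsilon' - O(\epsilon\ln\randomt_{\epsilon,\epsilon'})$. The only cosmetic differences are that the paper frames it as a direct lower bound rather than a contradiction and uses the weaker threshold $2\epsilon$ (rather than your $\epsilon'/2$) for the energy drop, which still suffices since $2(1-3\eta)>1$.
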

\begin{proof}
    Given an interval $I\subset \Z$, we write $H_I = \sum_{x\in I}H_x$. 
    For convenience, let us denote the (random) interval $R(n,\delta,x(\randomt_{\epsilon,\epsilon'}))$ as $[a,b]$. 
    The continuity equation~\eqref{eq: continuity equation} yields
    \[
        H_{[a-1,b]}(\randomt_\epsilon) - H_{[a-1,b]}(\randomt_{\epsilon,\epsilon'})
        \; = \; 
        \int_{\randomt_{\epsilon,\epsilon'}}^{\randomt_\epsilon} \dd s
        \left( j_{a-1}(s) - j_{b+1}(s) \right)
    \]
    and, using the decomposition~\eqref{eq: approximate commutator equation}, this implies
    \begin{multline}\label{eq: main argument}
    \left(H_{[a-1,b]}(\randomt_\epsilon) - H_{[a-1,b]}(\randomt_{\epsilon,\epsilon'})\right)
    + \left( u_{a-1}(\randomt_\epsilon) - u_{a-1}(\randomt_{\epsilon,\epsilon'}) \right)
    - \left( u_{b+1}(\randomt_\epsilon) - u_{b+1}(\randomt_{\epsilon,\epsilon'})\right)\\
    \; = \; 
    \int_{\randomt_{\epsilon,\epsilon'}}^{\randomt_\epsilon} \dd s
        \left( g_{a-1}(s) - g_{b+1}(s) \right).
    \end{multline}

    Let us first derive an upper bound on the absolute value of the right-hand side of \eqref{eq: main argument}. 
    An explicit expression for $g_{a-1}$ and $g_{b+1}$ is provided in \eqref{eq: explicit f i} with $i=n+1$. 
    First, since $a-1$ and $b+1$ are not $(n,\delta)$-resonant, the denominators in this expression can be lower bounded in absolute value as $\delta^{n+1}$. 
    Next, we upper-bound $|\tilde a_\bx^\bsigma(s)|\le \Const^n (\epsilon')^{n+2}$, which follows from the fact that the degree of $\tilde a_\bx^\bsigma(s)$ is $2(n+2)$, and from the generic bound 
    \[
        |a_x^\sigma| \; \le \; 
        \Const H_x^{1/2} \; \le \; \Const M(t)^{1/2} \;\le\; \Const (\epsilon')^{1/2},
    \]
    valid for any $t\in[\randomt_{\epsilon,\epsilon'},\randomt_\epsilon]$, 
    any $x\in\Z$ and any $\sigma=\pm 1$.
    Finally, using the bound in the first point of Lemma~\ref{lem: combinatorics} to bound the number of terms in \eqref{eq: explicit f i}, we arrive at 
    \begin{align}
    |\text{r.h.s.\@ of \eqref{eq: main argument}}|
    \; &\le \; \Const^n (\randomt_\epsilon - \randomt_{\epsilon,\epsilon'})
    (\epsilon'/\delta)^{n+2} \ed^{\Const (n+1)\ln(n+2)}
    \; \le \; 
    \Const^n \randomt_\epsilon \ed^{-(n+2)\big( (1-2\eta)\ln(1/\epsilon) - \Const\ln (n+2)  \big)}
    \nonumber\\
    \; &\le \; 
    \randomt_\epsilon \ed^{- (1-3\eta)(n+2)\ln(1/\epsilon)}\label{eq: upper bound rhs}
    \end{align}
    provided $\epsilon$ is small enough (recall the choice of $\epsilon$ in \eqref{eq: defining parameter}).
    
    Let us next derive a lower bound on the absolute value of the left-hand side of \eqref{eq: main argument}.
    For this, we first derive a lower bound on the terms involving $H$, and then an upper bound on the terms involving $u$. 
    From the definitions of $\randomt_\epsilon$ and $\randomt_{\epsilon,\epsilon'}$ in \eqref{eq: t epsilon} and \eqref{eq: t epsilon epsilon prime} respectively, we find
    \[
        H_{[a-1,b]}(\randomt_{\epsilon,\epsilon'}) - H_{[a-1,b]}(\randomt_\epsilon)
        \; \ge \; \epsilon' - (|R(n,\delta;x(\randomt_{\epsilon,\epsilon'}))|+1)\epsilon
        \; \ge \; \left(\epsilon^{-\eta} - \ln \randomt_{\epsilon,\epsilon'}-1\right)\epsilon.
    \]
    We may assume that $\randomt_{\epsilon,\epsilon'}<\varphi(\epsilon)$, with $\varphi$ defined in \eqref{eq: definition of phi}),
    %\textcolor{magenta}{rewrite expression of $\varphi$, or at least recall expression where it is defined},
    otherwise the result is obviously true since $\randomt_{\epsilon,\epsilon'} \le \randomt_\epsilon$. For $\epsilon$ small enough, we find 
    \[
        \epsilon^{-\eta} - \ln \randomt_{\epsilon,\epsilon'} - 1
        \; \ge \; 
        \epsilon^{-\eta} - \frac12 (\ln(1/\epsilon))^{1+\theta} - 1
        \; \ge \; 2.
    \]
    Hence
    \begin{equation}\label{eq: lower bound on energy difference}
        |H_{[a-1,b]}(\randomt_{\epsilon,\epsilon'}) - H_{[a-1,b]}(\randomt_\epsilon)|
        \; \ge \; 2 \epsilon.
    \end{equation}

    We finally need an upper bound on the terms involving $u$ in the left-hand side of \eqref{eq: main argument}.
    Proceeding as for the right-hand side of \eqref{eq: main argument}, and using now \eqref{eq: solutions u and f order n} and \eqref{eq: explicit u i} as representations for $u_{a-1}$ and $u_{b+1}$, we find 
    \begin{multline}\label{eq: upper bound terms in the lhs}
        \left| 
        \left( u_{a-1}(\randomt_\epsilon) - u_{a-1}(\randomt_{\epsilon,\epsilon'}) \right)
    - \left( u_{b+1}(\randomt_\epsilon) - u_{b+1}(\randomt_{\epsilon,\epsilon'})\right)
        \right| \\
        \; \le \; 
        \sum_{i=1}^n (\epsilon'/\delta)^{i+1} \ed^{\Const (i+1) \ln (i+1)}
        \; \le \;
        \sum_{i=1}^n \ed^{-(i+1)\big( (1 - 2\eta)\ln(1/\epsilon) - \Const \ln (n+1) \big)}
        \; \le \; \Const \epsilon^{2(1-3\eta)}
    \end{multline}
    provided $\epsilon$ is taken small enough.

    Combining \eqref{eq: lower bound on energy difference} and \eqref{eq: upper bound terms in the lhs}, we find that the left-hand side of \eqref{eq: main argument} is lower-bounded in absolute value by $\epsilon$, provided $\epsilon$ is taken small enough. 
    Combining this with \eqref{eq: upper bound rhs}, we find 
    \[
    \randomt_\epsilon 
    \;\ge\; 
    \epsilon \ed^{(1-3\eta)(n+2)\ln(1/\epsilon)}
    \;\ge\; 
    \ed^{(1-3\eta)n\ln 1/\epsilon} 
    \]
    which, using \eqref{eq: defining parameter}, yields the claim.
\end{proof}

\subsection{Proof of Theorem~\ref{th: main theorem}}

Let now let $\epsilon$ depend on time.
For $t\ge 1$, we set
\begin{equation}\label{eq: epsilon of t}
    \epsilon(t) \; = \; \ed^{-2 (\ln t)^{\frac{1}{1+\theta}}}.
\end{equation}
The function $\epsilon(t)$ is non-increasing and coincides with the threshold in the right-hand side of \eqref{eq: result main theorem} with $\theta$ as in \eqref{eq: values of eta and theta}.
Further, for $\varphi$ defined in \eqref{eq: definition of phi}, we compute that
\begin{equation}\label{eq: phi epsilon t}
    \varphi(\epsilon(t)) \; = \;  t^{2^{1+\theta}/2} \; \ge \; t+1
\end{equation}
for $t$ large enough. 
The parameters $n$, $\delta$ and $\epsilon'$ now also depend on time through \eqref{eq: defining parameter}.

We define the (possibly infinite) random time $\randomt^*$ featuring in Theorem~\ref{th: main theorem} as 
\[
    \randomt^* \; = \; \sup\{t\ge 1 : M(t) \le \epsilon(t)\}.
\]
We observe that, since $\epsilon(t)$ is decreasing, for any $t_0\ge 1$, 
\[
    \randomt^* \;\ge\; t_0
    \quad \Rightarrow \quad
    \exists t\ge t_0 \; : \; 
    \min_{0\le s \le t} M(s) \;\le\; \epsilon(t)
    \quad \Rightarrow \quad
    \exists t\ge t_0 \; : \; 
    \min_{0\le s \le \lfloor t\rfloor + 1} M(s) \;\le\; \epsilon(\lfloor t\rfloor).
\]
Hence 
\[
    \Proba(\randomt^* \ge t_0)
    \; \le \; 
    \sum_{t\in\N: t\ge \lfloor t_0 \rfloor} 
    \Proba\left(
    \min_{0\le s \le t + 1} M(s) \;\le\; \epsilon(t) 
    \right)
    \; =: \;
    \sum_{t\in\N: t\ge \lfloor t_0 \rfloor}  p(t).
\]
Therefore, to complete the proof of Theorem 1, it suffices to prove that $p(t)$ decays faster than any power law in $t$ as $t \to \infty$. Indeed, this will imply that $P(\randomt^* \ge t_0)$ also decays faster than any power law in $t_0$ as $t_0 \to \infty$, and consequently, that $\randomt^*$ has finite moments of all orders.

%Therefore, to conclude the proof of Theorem~\ref{th: main theorem}, it suffices to show that $\sum_{t\ge 1} p(t)^\alpha < + \infty$ for all $\alpha > 0$ \textcolor{magenta}{by comparison with a Riemann sum}.

For this, we note that 
\[
    p(t) \; = \; \Proba\left(
    \min_{0\le s \le t + 1} M(s) \;\le\; \epsilon(t) 
    \right)
    \; = \;
    \Proba\left( \randomt_{\epsilon (t)} < t+1 \right)
\]
and we now obtain a bound on this last quantity. 
By \eqref{eq: phi epsilon t} and Lemma~\ref{lem: probabilistic improvement on the a priori bound}, we conclude that $\randomt_{\epsilon (t)} < t+1$ implies $|R(n(t),\delta(t);x(\randomt_{\epsilon(t),\epsilon'(t)}))| > \ln \randomt_{\epsilon(t),\epsilon'(t)}$
for $t$ large enough.
Proposition~\ref{pro: conclusion probabilistic bound} yields an upper bound on the probability of this event, provided Assumption~\eqref{eq: condition n and delta} and the bound $\ln(c/\epsilon'(t))\ge 1$ are satisfied. This is the case for $t$ large enough, and we find 
\begin{align*}
    \Proba(\randomt_{\epsilon} < t+1 ) 
    \;&\le\; 
    \Proba \left( 
    |R(n,\delta;x(\randomt_{\epsilon,\epsilon'}))| > \ln \randomt_{\epsilon,\epsilon'}
    \right)\\
    \; &\le \; 
    \frac{\Const}{\epsilon'} \ed^{2\ln (c/\epsilon')}
    \ed^{\big( \Const_1 (n+1) \ln (n+1) - \eta \ln(1/\epsilon) \big) \frac{\ln (c/\epsilon')}{2(n+1)}}\\
    \; &\le \; 
    \frac{\Const}{\epsilon} \ed^{-\eta \frac{\ln (1/\epsilon) \ln (c/\epsilon')}{4 (\ln(1/\epsilon))^\theta}}
\end{align*}
where the dependence on $t$ has been omitted for simplicity.
Using~\eqref{eq: epsilon of t}, we find that the exponent in this last expression is lower bounded by $c (\ln t)^{\frac{2-\theta}{1+\theta}} = c (\ln t)^{5/4}$, hence the left-hand side vanishes faster than any inverse power law in $t$. 
This proves Theorem~\ref{th: main theorem}.

\subsection{The DNLS chain}\label{sec: DNLS}
We justify here Remark~\ref{rem: dnls chain}. 
First of all, we notice that the dynamics generated by the Schrödinger equation~\eqref{eq: schrodinger equation} is Hamiltonian, and that a smooth local observable $O$ evolve as $\dd O / \dd t  = \{ H,O\}$ provided we define the Poisson bracket as 
\[
    \{f,g\} \; = \;
    \imag \left(\nabla_{\overline\psi}f \nabla_\psi g 
    - \nabla_{\psi}f \nabla_{\overline\psi} g \right).
\]
The comparison with the system studied in this paper is made evident in the variables introduced in Section~\ref{sec: change of variables}, if we set
\[
    a_x^- \; = \; \psi_x, \qquad a_x^+ \; = \; \overline \psi_x. 
\]
Doing so, we observe in particular that the rules \eqref{eq: poisson bracket rules} for the Poisson bracket, that the expressions \eqref{eq: H har} for the harmonic part of the Hamiltonian, \eqref{eq: H an expansion} for the anharmonic part, and \eqref{eq: current new variables} for the current still hold. 
This is all that is required for the proof to carry over.

\bibliographystyle{plain}
\bibliography{bibliography_slow_spreading}

\end{document}